\newtheorem{problem}{Problem}
\newtheorem{defn}{Definition}
\newtheorem{theorem}{Theorem}
\newtheorem{fact}{Fact}
\newtheorem{lemma}{Lemma}
\newtheorem{claim}{Claim}
\newtheorem{cor}{Corollary}
\DeclareMathOperator{\prob}{Pr}
\DeclareMathOperator{\Var}{Var}
\DeclareMathOperator{\expect}{\mathbb{E}}
\DeclareMathOperator{\median}{median}
\DeclareMathOperator{\HH}{HH}
\DeclareMathOperator{\polylog}{polylog}
\DeclareMathOperator{\poly}{\mathbf{poly}}
\DeclareMathOperator{\Alg}{\bf{Alg}}
\DeclareMathOperator{\skp}{ sk(\vect{a})}
\DeclareMathOperator{\sk}{sk(\vect{a})}
\DeclareMathOperator{\sfp}{sf}
\newcommand{\skpriv}{\skp^{priv}}
\newcommand{\abs}[1]{\left| #1\right|}
\newcommand{\vect}[1]{\mathbf{#1}}
\newcommand{\norm}[1]{\left | \left| #1 \right|\right|}
\newcommand{\infinorm}[1]{\left|\left|#1\right|\right|_\infty}
\newcommand{\D}{D}
\newcommand{\Sd}{\mathcal{P}}
\newcommand{\eps}{\epsilon}
\newcommand {\Uni}{\mathcal{U}}
\newcommand{\junk}[1]{}
\newcommand{\st}[1]{a_{#1}}
\newcommand{\ai}{\st{i}}
\newcommand{\Dt}{\D^{(t)}}
\newcommand{\vx}{\vect{x}}
\newcommand{\va}{\vect{a}}
\newcommand{\vq}{\vect{q}}
\newcommand{\Dappr}{\tilde{\mathcal{D}} }
\begin{document}

\title{Pan-private Algorithms: When Memory Does Not Help}
\author{Darakhshan Mir\thanks{mir@cs.rutgers.edu. Rutgers University. Work supported by NSF Awardd Number CCF-0728937, and U.S. DHS  Award Number 2009-ST-061-CCI002,}
  \and S.~Muthukrishnan\thanks{muthu@cs.rutgers.edu. Rutgers
    University. Work supported by NSF awards 0354690, 0414852 and 0916782, and DHS CCICADA.} \and Aleksandar
  Nikolov\thanks{anikolov@cs.rutgers.edu. Rutgers University}
  \and Rebecca N.~Wright\thanks{rebecca.wright@rutgers.edu. Rutgers University}}
\maketitle
\begin{abstract}
Consider updates arriving online in which the $t$th input is $(i_t,d_t)$, where $i_t$'s are thought of as IDs
of users. Informally, a randomized function $f$ is {\em differentially private} with respect to the IDs if the probability distribution induced by $f$ is not much different from that induced by it on an input in 
which occurrences of an ID $j$ are replaced with some other ID $k$. Recently, this notion was  extended to  {\em pan-privacy} where the computation of $f$ retains differential privacy, even if the internal memory of the algorithm is exposed to the adversary (say by a malicious break-in or by fiat by the government). This is a strong notion of privacy, and surprisingly, for basic counting tasks such as distinct counts, heavy hitters and others, Dwork et al~\cite{dwork-pan} present pan-private algorithms with reasonable accuracy. The pan-private algorithms are nontrivial, and rely on sampling.

We reexamine these basic counting tasks and show improved bounds. In particular, 
we estimate the distinct count $\Dt$ to within $(1\pm \eps)\Dt \pm O(\polylog m)$, where $m$ is the number of elements in the universe. This uses suitably noisy statistics on sketches known in the streaming literature. 
We also present the first known lower bounds for pan-privacy with respect to a single intrusion. Our lower bounds show that, even if allowed to work with unbounded memory, pan-private algorithms for distinct counts 
can not be significantly more accurate than our algorithms. Our lower bound uses noisy decoding. 
For heavy hitter counts, we present a pan private streaming algorithm that is accurate to within $O(k)$ in 
worst case; previously known bound for this problem is arbitrarily worse. An interesting aspect of our pan-private algorithms is that, they deliberately use very small (polylogarithmic) space and tend to be streaming algorithms, even though using more space is not forbidden.
\end{abstract}

\newpage
\section{Introduction}
Consider updates arriving online in which the $t$th input is $(i_t,d_t)$. 
Define input $S_t$ as the first $t$ updates, i.e.~$(i_1, d_1), \ldots, (i_t, d_t)$; $i_t$'s are IDs
of users from Universe $\Uni$  of size $m$.
An example is to think of this as a ``traffic log'' where $i_t$ is the
ID of a  user, $i_t \in \Uni$ and $d_t$ is the time spent by the user
at a particular website of interest; another example is to think of the input as a ``payment log'' where $i_t$ is the ID of a merchant, $i_t \in \Uni$ and $d_t$ the transaction value, which may be positive for sales and negative for refunds. A user may visit the site many times and a merchant may have many transactions; hence, same $i_t$'s may be seen several times. It is of great interest to maintain various statistics on such logs. For example, 
\begin{itemize}
\item
{\em distinct count}, $\Dt$, is the number of distinct $i_j$'s seen before $t$h update; 
\item
{\em heavy hitters count} $HH(k)$, informally, is the number of $i$'s that have large total $d$, $\sum_{j \leq t| i_t=i} d_j$ (a precise definition is presented later); 
\item
{\em rarity ratio} $r(k)= \frac{|\{i | (\sum_{j\leq t| i_j=i} d_j ) = k \}|}{\Dt}$;
\item
{\em frequency moment} $F_k= \sum_{i \in \Uni}{(\sum_{j  \leq t| i_j=i} d_j)^k}$;
\end{itemize}
and others. Normally, these statistics are trivial to maintain with an
array $\va$ of size $m$ ($a_i = \sum_{j \leq t| i_j = i}{d_j}$),
and some basic bookkeeping. These statistics -- in one form or the other -- have a long history, and 
are considered basic in data analysis tasks over the past few decades.

Our focus is on privacy, that is, how to maintain these statistics and still preserve the privacy of IDs involved.  There are two concerns: 
\begin{itemize}
\item
{\em What if the output reveals something about the IDs?} For example, an adversary might estimate $\Dt$ first, and then insert an $(i, 1)$ before determining $D_{t+1}$ which will surely reveal if $i$ was already in the input prior to $t$. Likewise one can devise insertion and query strategies that will reveal information about various IDs from other statistical queries.

\item
{\em What if the adversary gets access to the system and sees the internal memory used by the algorithm?} This might happen not only with intruders but may even be the outcome of a legal request which will force us to reveal all the stored information. In this case, with the trivial solution, $\ai$ will end up revealing 
information about $i$. Of course, one could hash (encrypt) IDs and index in the hashed space. But when the memory 
is compromised, the hash (encryption) function will get revealed and will let the adversary decode by enumerating IDs.
Often, sampling algorithms are used for providing statistical estimates, but these are vulnerable because
when the internal memory is revealed, the sampled IDs are compromised. 
\end{itemize}

To overcome the first concern, we can adopt the notion of {\em differential privacy}~\cite{DMNS}. 
Let $S'_t$ be the updates derived from $S_t$ by replacing some occurrences of some ID $j$ with occurrences of some other ID $k$. 
Informally, a randomized function $f$ is {\em differentially private} with respect to the IDs if the probability distribution induced by $f(S_t)$ on the range of $f$ is not much different from that induced by $f(S'_t)$ for any $S'_t$ as defined above, and any $t$. For the first 3 statistics listed above, using known techniques, it is straightforward to get differentially private estimates; 
for frequency moments, one can look at a related function {\em cropped frequency moment} 
$T_k(\tau)= \sum_{i \in \Uni} \min\{(\sum_{j, i_j=i} d_j)^k, \tau\}$ that bounds what is known as the 
sensitivity of the function and get differentially private estimates. 

The authors in~\cite{dwork-pan} initiated the study that addresses the second concern above. In particular, they 
defined the notion of {\em pan-privacy}. Informally, $S_t$ and $S'_t$ 
should produce very similar distributions on {\em both} internal states as well as outputs. 
Without some  ``secret state''it might seem  impossible to estimate statistics privately, but 
~\cite{dwork-pan} showed that some of the statistics above can be estimated accurately. Their main results 
were for {\em streaming algorithms}, that use space polylogarithmic in $m$ and other parameters. 
In particular, they showed pan-private streaming algorithms for rarity ratio, distinct count, cropped mean $T_1$ and a version of heavy hitters. 

We are inspired by this work~\cite{dwork-pan} and this emerging direction of pan-private algorithms~\cite{cynthia} to revisit these problems. There are some outstanding fundamental questions: 
\begin{itemize}
\item
Is there a cost to pan-privacy, that is, are there problems for which pan-privacy provably needs more resources or loss of accuracy, compared to just differential privacy?

\item
What is the impact of memory in pan-privacy? Since the memory used by the algorithm may get revealed to an
adversary, do pan-private algorithms use very small memory like in the
streaming algorithms of~\cite{dwork-pan}, or can they use large memory to better encode information about the input and get better accuracy?

\item
Technically,~\cite{dwork-pan} used samples and adapted techniques from {\em randomized response}~\cite{rr} such as distorting counters with random shifts or using two distinct distributions. In contrast, in streaming~\cite{M}, some of the most powerful algorithms use sketches that are linear projections of data along random directions. Do sketches provide improved or richer pan-private algorithms?
\end{itemize}

\paragraph{Our Contributions}
We address these questions and make the following main contributions. We focus on the basic model of pan-privacy as formulated by~\cite{dwork-pan} where memory may be breached by an adversary once unannounced to the algorithm (and later comment on the variants of the model). 

\begin{itemize}
\item
{\em Distinct Counts}. We present a streaming algorithm that is $\eps$-pan private and outputs an estimate
$(1+\eps)\Dt \pm polylog(m)$. It directly uses sketch known before based on stable
distributions for estimating distinct counts~\cite{stablehamming}, but maintains noisy versions. In fact, this approach is powerful and our pan-private algorithms even work for {\em turnstile} streams where $d_i$'s may be negative, the first pan-private algorithms
to have this property. In contrast, best previous result for pan-private streaming estimation of distinct count 
outputs an estimate $\Dt \pm \alpha m$ with constant probability, for only nonnegative updates~\cite{dwork-pan}. Note that stable distribution based 
approach is known to yield streaming algorithms for $F_k$ for $0\leq k \leq 2$ ~\cite{Indyk}, but 
this analogy does not work to get pan-private estimate of $F_k$ (adapted as $T_k$); therefore, that it works for 
pan-private $k=0$ (which is related to distinct counts) is very interesting. 

We complement this result by showing lower bounds. Let $\mathcal{A}$
be an online (not necessarily streaming) algorithm that on input $S_t$
outputs $\Dt \pm o(\sqrt{m})$ 
with small constant probability.  Then $\mathcal{A}$ is not
$\epsilon$-pan private for any constant $\epsilon$. This is the
first-known lower bound for any pan-private algorithm in this
model. In fact, we develop an approach to showing lower bounds (which
may be of independent interest in the future) that takes a copy of the
memory by breaching the algorithm once, and then simulating the
algorithm with random inputs in parallel with this seed memory like
noisy decoding~\cite{dinur2003revealing}.  Our lower bound holds no
matter the memory used by $\mathcal{A}$, even if the memory is
$\Omega(m)$. Thus, $\mathcal{A}$ need not be a streaming
algorithm. Our lower bound is not like the ones in streaming
literature where the lower bound is conditioned on using small space,
or like in differentially private optimization~\cite{gupta} where one
shows structural relationship between ``near" configurations of
inputs.  We show this lower bound is essentially tight if
$\mathcal{A}$ is {\em not} streaming: we show a simple pan-private
algorithm that outputs an estimate $\Dt \pm o(\sqrt{m})$ with constant
probability and maintains $O(m)$ memory. Further, we show a lower
bound of $(1+\eps)\Dt \pm polylog(m)$, essentially tight upto additive
polylog terms with our streaming algorithm.

\item
{\em Heavy Hitters Count}.
As is standard in streaming literature, we define 
$\HH^{(t)}(k)$ as the number of IDs $i$ with $\sum_{i | i_j=i} d_j \geq F_1^{(t)}/k$. In this notation,~\cite{dwork-pan} 
approximates $\HH(k)$ within an additive error of $O(\alpha m)$ for
any constant $\alpha$. However, $m$ can far exceed $k$ which is an
upper bound on $\HH(k)$. 

We  present a pan-private streaming algorithm that returns an estimate in 
$[(1-\eps)\HH(k) - O(\sqrt{k}), \HH(O(k^2)) + O(\sqrt{k})]$ (that is no worse than $O(k)$ approximation, upto additive errors), which is a significant improvement over~\cite{dwork-pan}. We obtain this by first observing 
that with $O(m)$ space, we can provide an estimate $\HH(k) \pm O(\sqrt{m})$, and then using this only
on the space of all buckets in the Count-Min sketch~\cite{cormode2005improved} which uses much smaller space. 
\end{itemize}

Some comments: (1)  Both of our results above are obtained using sketches, which is different from use of samples thus far~\cite{dwork-pan}. Also, we use full space versions on top of sketches to get best-known accuracies. 
(2) An interesting aspect of our pan-private algorithms is that, they deliberately use very small (polylogarithmic) space and are streaming 
even though using more space is not forbidden. (3) Our insights from above yield other upper bounds (pan-private streaming algorithms for $T_k$,
inner products of vectors etc) and lower bounds (inner products). 
(4) We are adapting pan-privacy model
from ~\cite{dwork-pan} as a given and refer the readers to that original work for motivating and 
defending the model as well as  discussion 
related to the model such as, what if a small amount of secret storage is allowed, or what if adversary is 
allowed to look at the memory multiple times or even continually and so on. For the purposes of this
paper, the basic pan-privacy model is of great interest and there are fundamental techical problems that we 
address. (5) Likewise, the specific statistics we have considered have many applications that have been 
identified over the past decade from databases to data streams, compressed sensing and beyond~\cite{M}. We 
do not elaborate on this further, instead addressing how these problems can be solved. (6) 
Finally, we have focused on counts throughout. Many of these problems have a corresponding ``list'' version
in which output comprises specific IDs. We have left it open to identify suitable pan-private versions 
of these problems. 

\paragraph{Map.} 
In Section 2, we introduce relevant definitions and notation. In
Section 3, we present our upper and lower bounds for distinct count
estimation. In Section 4, we present our upper bound for heavy hitter
count estimation.  In Section 5, we have concluding remarks with other
extensions.

\section{Preliminaries}\label{sec-def}

\subsection{Definitions and Notation}

We are given a universe $\Uni$, where $|\Uni| = m$. An \emph{update}
is defined as an ordered pair $(i, d) \in \Uni \times
\mathbb{Z}$. Consider a semi-infinite sequence of updates $(i_1, d_1),
(i_2, d_2), \ldots$; the \emph{input} for all our algorithms consists
of the first $t$ updates, denoted $S_t = (i_1, d_1), \ldots, (i_t,
d_t)$. The \emph{state} after $t$ updates is an $m$-dimensional vector $\va^{(t)}$, indexed by the elements in $\Uni$ (we will omit the
superscript when it is clear from the context). The elements of the
vector $\va = \va^{(t)}$, referred to as the \emph{state vector}, are defined as follows:
\begin{equation*}
  \ai = \sum_{j: i_j = i}{d_j}.
\end{equation*}
We consider two models: the \emph{cash register model} in which all
updates are positive, i.e.~$\forall j: d_j \geq 0$, and the
\emph{turnstile} model in which updates can be both positive
(\emph{inserts}), i.e.~$d_j \geq 0$, and negative (\emph{deletes}),
i.e.~$d_j < 0$. We note that the turnstile model has not been
considered in pan privacy before.

Our algorithms \emph{output} a real number which approximates one of
the following statistics on $S_t$:
\begin{itemize}
\item \emph{distinct count}: $D = \Dt = |\{i \in \Uni: \ai
  \neq 0\}|$;
\item $k$-th \emph{frequency moment}: $F_k = F^{(t)}_k = \sum_{i \in
    \Uni}{|\ai|^k}$. This coincides with the $L_k$ norm,
  $\|\vect{a}\|_k$ of the state vector $\vect{a}$ and we will use
  either terms to facilitate exposition.
\item $k$-th \emph{cropped frequency moment}: $T_k(\tau) =
  T_k^{(t)}(\tau) = \sum_{i \in \Uni}{\min\{|ai|^k, \tau\}}$.
\item \emph{cropped dot product}: Given two sequences of updates $S_t$
  and $S'_t$ with state vectors $\va$ and $\va'$, the cropped dot
  product is $(\va\cdot\va')(\tau) = \sum_{i \in \Uni}{\min\{\ai \ai',
      \tau\}}$. 
\item $k$-\emph{heavy hitters count}: $\HH(k) = \HH^{(t)}(k) = |\{i:
  |\ai| \geq F_1^{(t)}/k\}|$.
\end{itemize}

\subsection{Differential Privacy}
Dwork et al.~\cite{DMNS} introduce the concept of differential privacy which operates on a data set consisting of rows of data, where each row consists of the data of an individual. Differential privacy provides a guarantee that the probability distribution on the outputs of a mechanism is ``almost the same'', irrespective of whether an individual opts in to, or out of, the data set. Such a guarantee incentivizes participation of individuals in a database by assuring them of incurring very litle risk by such a participation. Formally,
\begin{defn}[\cite{DMNS}]
 A randomized function $f$ provides $\eps$-differential privacy if for all neighboring (differing in at most one row) data sets $D$ and $D'$, and all $Y \subseteq Range(f)$,
\[\Pr[f(D) \in Y] \leq exp(\eps) \times \Pr[f(D') \in Y].\]
\end{defn}
One mechanism that~\cite{DMNS} use to provide differential privacy is the so called ``Laplacian noise method'', which depends on the {\em global sensitivity} of a function:
\begin{defn}[~\cite{DMNS}]\label{def:sensitivity}
For $f: \mathcal{D} \rightarrow \mathbb{R}^d$, the global sensitivity of $f$ is
\[GS_f = \max_{D,D'} \norm{f(D) -f(D')}_1 \]
for all neighboring data sets $D$ and $D'$.
\end{defn}
The Laplace distribution with mean $0$ and scale parameter $b$, denoted $Lap(b)$, has density function $p(x)=\frac{1}{2b} \exp(-\abs{x}/b)$. The following theorem from~\cite{DMNS} uses the Laplace distribution to construct a differentially private mechanism:

\begin{theorem}[\cite{DMNS}]\label{thm:Laplace}
 For $f:\mathcal{D} \rightarrow \mathbb{R}$, mechanism $\mathcal{M}$ that adds independently generated noise drawn from $Lap(GS_f/\eps)$ to the output preserves $\eps$-differential privacy.
\end{theorem}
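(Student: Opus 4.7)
The plan is to reduce the theorem to a pointwise bound on the density ratio, and then integrate to pass from densities to probabilities of arbitrary measurable sets $Y \subseteq \mathbb{R}$. Let $b = GS_f/\eps$ and let $p(x) = \frac{1}{2b}\exp(-|x|/b)$ be the density of $\Lap(b)$. Since $\mathcal{M}(D) = f(D) + Z$ with $Z \sim \Lap(b)$, the density of $\mathcal{M}(D)$ at a point $y \in \mathbb{R}$ is exactly $p(y - f(D))$, and similarly the density of $\mathcal{M}(D')$ at $y$ is $p(y - f(D'))$.

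Next I would bound the ratio of these densities at an arbitrary point $y$. Writing things out,
\begin{equation*}
\frac{p(y - f(D))}{p(y - f(D'))} = \exp\!\left(\frac{|y - f(D')| - |y - f(D)|}{b}\right).
\end{equation*}
By the reverse triangle inequality, $|y - f(D')| - |y - f(D)| \leq |f(D) - f(D')|$, and by the definition of global sensitivity this is at most $GS_f$. Hence the density ratio is at most $\exp(GS_f/b) = \exp(\eps)$, and this bound holds uniformly in $y$.

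Finally, to conclude $\eps$-differential privacy, I would lift the pointwise density bound to a set-wise probability bound: for any measurable $Y \subseteq \mathbb{R}$,
\begin{equation*}
\Pr[\mathcal{M}(D) \in Y] = \int_Y p(y - f(D))\, dy \leq \int_Y e^{\eps}\, p(y - f(D'))\, dy = e^{\eps} \Pr[\mathcal{M}(D') \in Y],
\end{equation*}
which is exactly the definition of $\eps$-differential privacy. The only mildly delicate step is the reverse triangle inequality combined with the definition of $GS_f$, but there is no real obstacle here; the proof is essentially a direct calculation once one recognizes that the Laplace density's logarithm is Lipschitz with constant $1/b$, which is precisely why the scale $GS_f/\eps$ is chosen.
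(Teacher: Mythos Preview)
Your argument is correct and is in fact the standard proof of this result. Note, however, that the paper does not supply its own proof of this theorem: it is stated with a citation to \cite{DMNS} and used as a black box, so there is nothing in the paper to compare your argument against. What you have written is essentially the original Dwork--McSherry--Nissim--Smith calculation.
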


\subsection{Pan-privacy}

While differential privacy provides meaningful gurantees to mitigate the risks of an individual being identified by particpating in a data set, individuals might also be concerned about retaining similar guarantees even if the {\em internal state} is revealed, say, because of a subpoena. Mechanisms that achieve this property are called {\em pan-private}~\cite{dwork-pan}. Pan privacy guarantees a participant that his/her risk of being identified by participating in a data set is very little even if there is an external intrusion on the data. 
Formally, consider two online updates $S= \lbrace (i_1, d_1), \ldots, (i_t, d_t) \rbrace$ and $S'=\lbrace (i'_1, d'_1), \ldots, (i'_{t'}, d'_{t'}) \rbrace$ associated with state vectors $\vect{a}$ and $\vect{a'}$ respectively. 
\begin{defn}\label{def:neigh}
$S$ and $S'$ are said to be {\em neighbors} if there exists a (multi)set of updates in $S$ indexed by $K \subseteq [t]$ that update the same ID $i \in \Uni$, and there exists a (multi)set  of updates in $S'$ indexed by $K' \subseteq [t']$ that updates some $ j(\ne i) \in \Uni$ such that $\sum_{k \in K}d_k = \sum_{ k \in K'} d'_k$ and for  all other updates in $S$ and $S'$ indexed by $Q=[t]-K$ and $Q'=[t']-K'$ respectively,
\[\forall {i \in \Uni}~ \displaystyle\sum_{ k \in Q, s.t.~ i_k=i } d_k = \displaystyle \sum_{ k \in Q', s.t.~ i'_k=i} d'_k\].
\end{defn}
Notice that in the definition above $t$ and $t'$ don't have to be
equal because we allow the $d_i$'s to be integers. The definition
ensures that two inputs are neighbors if some of the occurrences of an
ID in $S$ is replaced by some other ID in $S'$ and everything else
essentially stays the same except (a) the order may be arbitrarily
different and (b) the updates can be broken up since they are not
constrained to be $1'$'s. The neighbor relation preserves the first
frequency moment of the sequence of updates, considered to be public
information. Also, the graph induced by the neighbor relation on any
set of sequences with the same first frequency moment is connected.

\begin{defn}[User level pan-privacy\cite{dwork-pan}]\label{defn:panprivacy}
 Let $\Alg$ be an algorithm. Let $I$ denote the set of internal states of the algorithm, and let $\sigma$ the set of possible output sequences. Then algorithm $Alg$ mapping input prefixes to the range $I \times \sigma$, is {\em pan-private (against a single intrusion)\footnote{See~\cite{dwork-pan} for discussion about multiple intrusions.}} if for all sets $I' \subseteq I$ and $\sigma'\subseteq \sigma$, and for all pairs of user-level neighboring data stream prefixes $S$ and $S'$
\[\Pr[\Alg(S) \in (I', \sigma')] \leq e^\varepsilon \Pr[\Alg(S') \in (I', \sigma')] \]
where the probability spaces are over the coin flips of the algorithm $\Alg$.
\end{defn}

\section{Distinct Count Estimation}

In this section we present upper and lower bounds for the problem of
pan-private estimation of the distinct count statistic $\Dt$. We
utilize a sketching approach based on a stable
distribution. In contrast with the sampling approach of Dwork et
al.~\cite{dwork-pan}, the sketching approach works in the more general
turnstile model and for the usual range of $\Dt$ achieves significantly
better accuracy. We present our algorithm for distinct count
estimation as evidence of the usefulness of the sketching approach for
designing pan-private algorithms. 

We compliment our upper bound with lower bounds based on noisy
decoding. Our results present the first lower bounds against
pan-private algorithms that allow a single intrusion. 

\subsection{Upper Bounds}

Consider the turnstile model where the $d_j$'s could either be positive or negative, and assume an upper bound on the absolute value of each element of the state vector: $\forall i \in \Uni, \abs{\ai} < Z$. 
We are interested in a pan-private computation of  $\Dt= \{i | \vect{a}^{(t)}[i] \not= 0\}$. Note that where the superscipts don't appear a time slice of $t$ is implicit.
Recall that the $L_p$ norm of a vector $\vect{a}$ is 
$\norm{\vect{a}}_p = (\sum_i \abs{\ai}^p)^{\frac{1}{p}}.$ 

\subsection{Prior Approach in Streaming Algorithms}

~\cite{stablehamming} show that, for sufficiently small $p~ (0<p < \epsilon/\log Z)$ 
\begin{equation} \label{eq:Dt}
\Dt \leq \sum_i \abs{\ai}^p \leq (1+\epsilon) \Dt.
\end{equation}
Hence, it suffices to estimate the $L_p$ norm of $\vect{a}$ for certain small $p$ for estimating the distinct counts. For this purpose they use what are called \emph{stable distributions}.
\paragraph{Stable distributions and their use in sketches}

A distribution $\Sd$ over $\mathbb{R}$ is said to be $p$-stable, if there exists $ p \geq 0$ such that for any $n$ real numbers $b_1, \ldots , b_m$ and i.i.d. variables $Y_1, \ldots, Y_m$ with distribution $\Sd$, the random variable $\sum_i b_iY_i$ has the same distribution as the random variable $(\sum_i \abs{b_i}^p)^{1/p} Y$, where $Y$ is a random variable with distribution $\Sd$~\cite{nolan:2010}.  
Let $X$ be a matrix of random values of dimension $m \times r$, where each entry of the matrix $X_{i,j}$, $ 1\leq i \leq m$, and $1\leq j \leq r$, is drawn independently from a random stable distribution with parameter $p$, with $p$ as small as possible. The \emph{sketch vector} $\sk$ is defined as the dot product of matrix $X^T$ with $\vect{a}$, so
\[\sk_j = \sum_{i=1}^m X_{i,j}\ai = X_j\cdot \vect{a},\]
where $X_j$ is a $m$-dimensional vector composed of the following elements:$(X_{1,j}, X_{2,j}, \ldots X_{m,j})$.

From the property of stable distributions we know that each entry of $\sk$ is distributed as $(\sum_i\abs{\ai}^p)^{1/p}X_0$, where $X_0$ is a random variable chosen from a $p$-stable distribution. The sketch is used to compute $\sum_i \abs{\ai}^p$ for $ 0<p < \eps/\log Z$, from which we can approximate $\Dt$ up to a ($1+ \eps)$ factor. By construction, any $\sk_j$ can be used to estimate $L_p^p$. ~\cite{stablehamming} obtain a good estimator for $(\sum_i \abs{\ai})^p$  by taking the median of all entries $\abs{\sk_j}^p$ over $j$: 
\begin{lemma}[\cite{stablehamming}]\label{lemma:median} With probability $1-\delta$ if $r =O(1/\epsilon^2 \cdot \log(1/\delta))$, 
 \[ (1-\epsilon)^p \median_j \abs{\sk_j}^p \leq \median \abs{X_0}^p (\sum_i \abs{a_i}^p) \leq (1+ \eps)^p \median_j \abs{\sk_j}^p \]
where  $\median \abs{X_0}^p$, is the median of absolute values (raised to the power $p$) from a $p$-stable distribution.
\end{lemma}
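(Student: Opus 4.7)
The plan is to reduce the claim to the statement that the sample median of $r$ i.i.d.\ copies of a continuous random variable concentrates multiplicatively around its true median when $r$ is large enough, then apply this to $|X_0|^p$.

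First I would use $p$-stability to rewrite each sketch entry. Since the rows of $X$ are i.i.d.\ with $p$-stable entries and each $\sk_j = \sum_i X_{i,j}\ai$, the $p$-stability property gives that $\sk_j$ is distributed as $L^{1/p} X_0^{(j)}$, where $L := \sum_i |\ai|^p$ and $X_0^{(1)},\ldots,X_0^{(r)}$ are i.i.d.\ copies of $X_0$. Consequently $|\sk_j|^p$ is distributed as $L \cdot |X_0^{(j)}|^p$, and since the median commutes with multiplication by a positive scalar,
\begin{equation*}
\median_j |\sk_j|^p = L \cdot \median_j |X_0^{(j)}|^p.
\end{equation*}
So it suffices to show that with probability at least $1-\delta$, the sample median $\median_j |X_0^{(j)}|^p$ lies in the interval $[(1-\eps)^p\,\mu, (1+\eps)^p\,\mu]$, where $\mu := \median |X_0|^p$.

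Next I would argue this by a standard quantile-concentration argument. Let $F$ be the CDF of $|X_0|^p$; note $F$ is continuous and strictly increasing in a neighborhood of $\mu$, since $p$-stable densities are smooth and strictly positive. Set $u = (1+\eps)^p\mu$ and $\ell = (1-\eps)^p\mu$, and let $q_+ = F(u) - 1/2$ and $q_- = 1/2 - F(\ell)$. By smoothness of $F$ at $\mu$ and the bound $(1\pm\eps)^p = 1 \pm \Theta(p\eps)$ for small $p\eps$, both $q_+$ and $q_-$ are at least some constant $\gamma = \Theta(\eps)$ (since the density of $|X_0|^p$ at $\mu$ is a positive constant). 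Now the sample median exceeds $u$ only if more than $r/2$ of the $r$ samples land above $u$, an event whose probability is bounded by a Chernoff bound applied to a $\mathrm{Binomial}(r,1/2-q_+)$ random variable; symmetrically for falling below $\ell$. Each tail is at most $\exp(-\Omega(\gamma^2 r)) = \exp(-\Omega(\eps^2 r))$, and choosing $r = O(\eps^{-2}\log(1/\delta))$ makes each at most $\delta/2$. A union bound finishes the argument, rearranging to the form stated in the lemma.

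The main obstacle is the quantitative step that translates the additive CDF gap $\gamma$ into the multiplicative quantile gap $(1\pm\eps)^p$. This requires a lower bound on the density of $|X_0|^p$ at its median, which in turn follows from the known smoothness and positivity of the density of a $p$-stable distribution (for $0 < p \le 2$). Once one has $f_{|X_0|^p}(\mu) \ge c > 0$ in a neighborhood of $\mu$, a first-order estimate gives $q_{\pm} \ge c(u-\mu) = c\mu\,\Theta(p\eps) = \Theta(\eps)$ (absorbing $p$-dependent constants), after which the Chernoff calculation above closes the proof. Everything else is routine: $p$-stability to get identically distributed sketch entries, and a Chernoff/Hoeffding bound to control the sample median.
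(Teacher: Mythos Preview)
The paper does not supply a proof of this lemma; it is quoted verbatim from Cormode, Datar, Indyk, and Muthukrishnan~\cite{stablehamming} and used as a black box. So there is no ``paper's own proof'' to compare against.

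That said, your sketch is the standard argument behind this result and is essentially correct. One simplification removes the worry you flag about $p$-dependent constants: because $x\mapsto x^p$ is monotone on $(0,\infty)$, the sample median commutes with it, so $\median_j |X_0^{(j)}|^p = (\median_j |X_0^{(j)}|)^p$. Hence the target inequality is equivalent to asking that the sample median of $|X_0^{(1)}|,\ldots,|X_0^{(r)}|$ lie in $[\nu/(1+\eps),\,\nu/(1-\eps)]$, where $\nu=\median|X_0|$. Now the multiplicative window on the \emph{unraised} variable is $(1\pm\eps)$, not $(1\pm\eps)^p$, so the CDF gap at the endpoints is $\Theta(\eps)$ times the density of $|X_0|$ at $\nu$ (a positive constant for fixed $p$), and the Chernoff bound gives the stated $r=O(\eps^{-2}\log(1/\delta))$ directly. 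This sidesteps the Taylor step $(1\pm\eps)^p\approx 1\pm p\eps$ and the attendant absorption of $p$ into the constants. Also note a small slip: the event you want is that the sample median of $|X_0^{(j)}|^p$ lies in $[\mu(1+\eps)^{-p},\,\mu(1-\eps)^{-p}]$, not $[(1-\eps)^p\mu,(1+\eps)^p\mu]$; the two intervals agree to first order, so your argument still goes through, but the exact rearrangement matters if one wants the inequality as stated rather than up to $O(\eps^2)$.
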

Using the results of Equation~\ref{eq:Dt} and Lemma~\ref{lemma:median} Cormode et al.~\cite{stablehamming} prove that:
\begin{theorem}[\cite{stablehamming}]\label{thm:approximation}
 The computation of a sketch $\sk$ of online data described by a state vector $\vect{a}$ that requires space $O(1/\epsilon^2. \log(1/\delta))$ allows an approximation of $\Dt$ within a factor of $1 \pm  \epsilon$ of the true answer with probability $1-\delta$.
\end{theorem}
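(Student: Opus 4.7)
The plan is to combine the two ingredients already stated just above the theorem: the sandwich inequality~\eqref{eq:Dt} relating $\Dt$ to $\sum_i |a_i|^p$ for suitably small $p$, and Lemma~\ref{lemma:median} which says that the median of $|sk_j|^p$ approximates $\median |X_0|^p \cdot \sum_i |a_i|^p$ up to a $(1\pm\epsilon)^p$ factor with probability $1-\delta$, provided we take $r = O(\epsilon^{-2}\log(1/\delta))$ sketch coordinates. The sketch itself is a linear map $\vect{a} \mapsto X^T \vect{a}$ that can be maintained in a streaming/turnstile fashion under updates $(i_t,d_t)$ by incrementing $sk_j \leftarrow sk_j + d_t\, X_{i_t,j}$, so only the $r$ real numbers composing $sk$ need to be stored, giving the claimed $O(\epsilon^{-2}\log(1/\delta))$ space (treating the random matrix $X$ as generated by a pseudo-random oracle keyed by the column/row index, as is standard).

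Concretely, I would fix $p = c\,\epsilon/\log Z$ for an appropriate small constant $c$, so that~\eqref{eq:Dt} applies, and then take $r = O(\epsilon^{-2}\log(1/\delta))$ independent $p$-stable columns. Given the stream, I maintain $sk_j = X_j \cdot \vect{a}$ for $j=1,\dots,r$. At query time I compute $M := \median_j |sk_j|^p$ and output
\[
\widehat{D} := \frac{M}{\median |X_0|^p},
\]
where the constant $\median |X_0|^p$ depends only on $p$ and can be precomputed (or simulated). By Lemma~\ref{lemma:median}, with probability at least $1-\delta$,
\[
(1-\epsilon)^p\, \widehat{D} \;\leq\; \sum_i |a_i|^p \;\leq\; (1+\epsilon)^p\, \widehat{D}.
\]
Combining with~\eqref{eq:Dt} yields $\Dt \leq (1+\epsilon)^p\, \widehat{D}$ and $\widehat{D} \leq (1-\epsilon)^{-p}(1+\epsilon)\,\Dt$, so after absorbing the $(1\pm\epsilon)^p = 1\pm O(p\,\epsilon)$ factors into the constants (which is legitimate because $p$ is chosen small), we obtain $\widehat{D} = (1\pm O(\epsilon))\Dt$. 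Rescaling $\epsilon$ by a constant gives exactly the stated $(1\pm\epsilon)$ guarantee.

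The main subtlety, and the step I would be most careful about, is the parameter bookkeeping around $p$: one needs $p$ small enough that~\eqref{eq:Dt} is tight to within $(1+\epsilon)$ (this forces $p = O(\epsilon/\log Z)$), yet the Lemma~\ref{lemma:median} guarantee is in terms of $(1\pm\epsilon)^p$, which for such tiny $p$ is essentially $1\pm p\epsilon$, so there is ample slack and the combined multiplicative error remains $1\pm O(\epsilon)$. The only other non-routine point is the storage of the matrix $X$: a naive reading of the algorithm suggests $\Omega(mr)$ space, but since each $X_{i,j}$ is used deterministically as a function of its indices, a Nisan-style pseudorandom generator (or the standard ``hash-based'' derandomization used in~\cite{stablehamming}) lets us regenerate $X_{i_t,j}$ on the fly from a short seed, preserving the $O(\epsilon^{-2}\log(1/\delta))$ space bound. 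Everything else is a direct application of the two cited results.
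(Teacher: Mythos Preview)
Your proposal is correct and matches the paper's approach: the paper does not give a detailed proof of this (cited) theorem but explicitly states that it follows from combining Equation~\eqref{eq:Dt} with Lemma~\ref{lemma:median}, which is exactly what you do, and the subsequent paragraph in the paper on maintaining the sketch under updates describes the same seed-based on-the-fly regeneration of the $X_{i,j}$'s that you invoke for the space bound. Your parameter bookkeeping around the $(1\pm\epsilon)^p$ factors is a reasonable fleshing-out of the one-line ``Using the results of Equation~\eqref{eq:Dt} and Lemma~\ref{lemma:median}'' that the paper gives.
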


\paragraph{Maintaining the sketch under updates}
As updates arrive, the sketch vector is built progressively. It is initialized to be the zero vector, and on receiving tuple $(i,d_k)$, the update is done by adding $d_k$ times $X_{i,j}$ to each entry $\sk_j,~ \forall j \in [r]$ of the sketch vector. That is,
\[ \forall j \in [r]: \sk_j \leftarrow \sk_j + d_kX_{i,j}.\]
In order to avoid percomputing and storing all the values $X_{i,j}$ Cormode et al.~\cite{stablehamming} generate the random variables $X_{i,j}$ from a stable distribution on the fly by using $i$ to seed a pseudo-random number generator $random()$. These pseudo-randomly generated numbers are then used to generate a sequence of $p$-stable distributed random variables using a (deterministic) function $stable(r_1,r_2,p)$, where $r_1$ and $r_2$ are pseudorandom variables in the range $[0 \ldots1]$ drawn from $random()$. The function is defined as follows: first define a quantity $\theta = \pi(r_1 -1/2)$. Now,
\[stable(1/2+\theta, r_2, p) = \frac{\sin p\theta}{\cos^{1/p} \theta} \left( \frac{\cos(\theta(1-p))}{-\ln r_2}\right)^{\frac{1-p}{p}}.\] 
Since each time the same seed $i$ is used, this ensures that $X_{i,j}=stable(r_1,r_2,p)$ takes the same value each time it is used. We will find this technique useful for our own purpose of precomputing the global sensitivity of a sketch in the next section.

\subsection{Pan-Private Algorithm}

To get pan-privacy, we maintain these (approximate) sketches in a differentially-private way. In particular, we maintain a noisy sketch vector where each element of the sketch vector has noise added according to the sensitivity method of~\cite{DMNS}.

\paragraph{Adding Laplacian noise to the sketches.}
The global sensitivity of a sketch $\sk_j$, ($GS_j$) from Definition~\ref{def:sensitivity} is
\[GS_j = 2\cdot Z \infinorm{X_j}. \]
Consider state vectors $\vect{a}$ and $\vect{a'}$ corresponding to two neighboring sequences of online updates $S$ and $S'$ respectively. From Definition~\ref{def:neigh} there exists some $i \in [n]$ and some $k\ne i \in [n]$, such that some occurences of $i$ in the sequence of updates in $S$ is replaced by some occurences of $k$ to get $S'$. This means that $\ai \ne a'_i$ and $a_k \ne a'_k$, and for any other $l$ not equal to $i$ or $k$, $a_l = a'_l$. So, for any neighboring $S$ and $S'$, 
\[\norm{X_j \cdot \vect{a} - X_j \cdot \vect{a'}}_1 \leq \abs{X_{i,j}a_i-X_{i,j}a'_i+X_{k,j}a_k -X_{k,j}a'_k} \leq 2 \cdot Z \infinorm{X_j}.\]
From~\cite{DMNS}, it will follow that we need to add Laplacian noise based on this sensitivity to have a differentially private description of the state at any point, which is pan-private with respect to a single intrusion.  Since the elements of $X_j$ are random quantities independent of the data, we can compute the $L_\infty$ norm of the actual vector that we end up using without compromising on privacy. However, 
the challenge is that $\infinorm{X_j}$ is not known in advance. 
The use of index $i$ to seed the pseudorandom generator and use of the pseudrandomly generated values to generate the $X_{i,j}$'s, means that this challenge can be solved by computing $\infinorm{X_j}$, $\forall j$ before the onset of 
our algorithm (shown in Algorithm~\ref{alg1}). Also to use the result of Lemma~\ref{lemma:median}, the value of $\median{\abs{X_0}^p}$, the median of absolute values from a $p$-stable distribution, needs to be computed. This is also done numerically in advance in~\cite{stablehamming}, and then the final result is scaled by this constant factor denoted as $\sfp(p)$.

Algorithm~\ref{alg1} modifies the algorithm in~\cite{stablehamming} by maintaining $\alpha$-differentially private sketches of the stream vector $\vect{a}$. 

Each sketch is initialized with a noisy value drawn from the appropriate Laplace distribution.
Formally, 
let $\skpriv= \sk_j + \eta_j$
where $\eta_j$ is a random variable drawn from a Laplacian distribution with mean 0 and scaling factor of $GS_j/\alpha$. Here $\alpha$ is the privacy parameter. Since we maintain $r$ sketches of the data, Algorithm~\ref{alg1} gives us an overall privacy of $\alpha'=\alpha r$ as per the composition theorem~\cite{DMNS}:
\begin{theorem}[\cite{DMNS}]\label{composition}
 Given mechanisms $\mathcal{M}_i~, i \in [r]$ each of which provide $\alpha_i$-differential privacy, then the overall mechanism $\mathcal{M}$ that consists of a composition of these $r$ mechanisms, provides $\left(\sum_{i \in [r]} \alpha_i\right)$-differential privacy.
\end{theorem}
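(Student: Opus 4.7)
The plan is to prove the composition theorem by reducing the statement to a product of likelihood ratios, one per mechanism, and then bounding each factor individually using the differential privacy guarantee of the corresponding $\mathcal{M}_i$. Fix a pair of neighboring datasets $D$ and $D'$, and let the combined output space be $\mathcal{R} = \mathcal{R}_1 \times \cdots \times \mathcal{R}_r$, where $\mathcal{R}_i$ is the range of $\mathcal{M}_i$. It suffices to show that for every measurable rectangle $Y = Y_1 \times \cdots \times Y_r \subseteq \mathcal{R}$, we have $\Pr[\mathcal{M}(D) \in Y] \leq e^{\sum_i \alpha_i} \Pr[\mathcal{M}(D') \in Y]$, since the inequality then extends to all measurable sets by a standard monotone class / $\pi$-$\lambda$ argument.

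First I would treat the simplest case where the $r$ mechanisms are run independently on $D$, so that the joint output factors: $\Pr[\mathcal{M}(D) \in Y] = \prod_{i=1}^r \Pr[\mathcal{M}_i(D) \in Y_i]$. Applying the $\alpha_i$-differential privacy guarantee for each $i$ gives $\Pr[\mathcal{M}_i(D) \in Y_i] \leq e^{\alpha_i} \Pr[\mathcal{M}_i(D') \in Y_i]$. Multiplying these $r$ inequalities yields
\[
\Pr[\mathcal{M}(D) \in Y] \leq \prod_{i=1}^r e^{\alpha_i} \Pr[\mathcal{M}_i(D') \in Y_i] = e^{\sum_i \alpha_i} \Pr[\mathcal{M}(D') \in Y],
\]
which is the desired bound.

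To handle the more general (adaptive) setting where $\mathcal{M}_i$ may depend on the outputs of $\mathcal{M}_1,\ldots,\mathcal{M}_{i-1}$, I would instead condition sequentially. Writing the joint density (or mass) as a product of conditionals, $\Pr[\mathcal{M}(D) = (y_1,\ldots,y_r)] = \prod_{i=1}^r \Pr[\mathcal{M}_i(D) = y_i \mid y_1,\ldots,y_{i-1}]$, the key observation is that once $y_1,\ldots,y_{i-1}$ are fixed, $\mathcal{M}_i$ is a well-defined randomized function of its dataset alone, and hence still satisfies $\alpha_i$-differential privacy. Applying the per-step bound and multiplying again yields the $e^{\sum_i \alpha_i}$ factor; integrating over $Y$ gives the theorem.

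The only real obstacle is a measure-theoretic one: ensuring that the factorization into conditionals is justified and that the extension from rectangles to arbitrary measurable subsets of $\mathcal{R}$ is clean. In the discrete setting (which is what the pan-privacy application in the paper actually needs, since we reason about output values directly), this is immediate by summation. In the continuous setting it requires identifying $\Pr[\mathcal{M}(D) \in \cdot]$ with a product measure (or iterated kernel) and invoking Fubini; I would note this but not belabor the point, since the mechanisms used later (Laplace noise added to sketch entries) are product measures by construction.
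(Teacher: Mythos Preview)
Your argument is correct and is the standard proof of the composition theorem: factor the joint output distribution (using independence, or conditional factoring in the adaptive case), apply each $\alpha_i$-DP bound to its factor, and multiply. The measure-theoretic caveat you raise is handled exactly as you indicate and is immaterial for the discrete/Laplace setting used downstream.

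There is nothing to compare against, however: the paper does not prove this theorem. It is quoted as a known result from~\cite{DMNS} and used as a black box (to add up the privacy losses across the $r$ sketch coordinates). So your proposal is not an alternative route but simply a valid proof of a cited fact that the paper leaves unproved.
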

\begin{algorithm}
\caption{Pan-private approximation of $\Dt$}
\label{alg1}
\begin{algorithmic}
\STATE \textbf{INPUT:}  privacy parameter $\alpha$,  $0 < p < \epsilon/Z <1$, $\infinorm{X_j} \forall  j \in [r]$ computed off-line, an $r$-dimensional noise vector $\boldsymbol{\eta}$, where $\eta_j \sim Lap(\frac{2\infinorm{X_j}Z}{\alpha}),~\sfp(p) = \median \abs{X_0}^p$ also computed off-line numerically.
\newline
\STATE initialize the $r$-dimensional sketch vector $\skpriv$, such that $\skpriv_j = \eta_j $
\FORALL{tuples $(i,d_t)$} 
\STATE initialize $random$ with $i$
\FORALL{$j=1$ to $r$}
\STATE $r1= random()$
\STATE $r2=random()$
\STATE $\skpriv_j= \skpriv_j+ d_t*stable(r1,r2, p)$
\ENDFOR
\ENDFOR
\STATE return $\Dappr=\median_j\left(\abs{\skpriv_j}^p\right)* \sfp(p)$
\end{algorithmic}
\end{algorithm}

Our main result for distinct count estimation is to prove that $\Dappr$ returned by Algorithm~\ref{alg1} provides an $\alpha'$-differentially private approximation of $\Dt:$
\begin{theorem}\label{dis-up}
With probability $1-(r+1)\delta$, Algorithm~\ref{alg1} computes an $\alpha'$-pan-private 
approximation $\Dappr $ of $\Dt$ such that 
\[ (1-\epsilon)\Dt -O\left(\poly\left(\log(m)\cdot (1+\eps) \log(\frac{1}{\delta}) \frac{1}{\alpha'}\right)\right) \leq \Dappr \leq (1+\epsilon)\Dt +O\left(\poly\left(\log(m)\cdot (1+\eps) \log(\frac{1}{\delta}) \frac{1}{\alpha'}\right)\right)\]
\end{theorem}
We will need Claim~\ref{claim:bound}, and Lemmas~\ref{lemma:folklore} and~\ref{lemma:skpriv} for this purpose:

\begin{claim}\label{claim:bound}
 For any two real numbers $x$ and $y$ and for any $p\in [0,1)$, we have
\[\abs{x}^p-\abs{y}^p \leq \abs{x+y}^p \leq \abs{x}^p+\abs{y}^p\]
\end{claim}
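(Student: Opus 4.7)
The plan is to prove the upper inequality first and then deduce the lower inequality by a substitution. Both inequalities reduce to the scalar fact that the map $t \mapsto t^p$ is subadditive on $[0,\infty)$ when $p \in [0,1)$.

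For the upper bound, I would start from the ordinary triangle inequality $\abs{x+y} \leq \abs{x} + \abs{y}$ and apply the (weakly) monotone map $t \mapsto t^p$ (defined on $[0,\infty)$) to obtain
\[
\abs{x+y}^p \leq (\abs{x} + \abs{y})^p.
\]
It then suffices to prove $(a+b)^p \leq a^p + b^p$ for all $a,b \geq 0$ and $p \in [0,1)$. The $p=0$ case is immediate (with the convention $0^0=1$ only mattering at the trivial boundary), and if $a+b=0$ both sides vanish. Otherwise, divide through by $(a+b)^p$, set $s = a/(a+b)$ and $t = b/(a+b)$, so that $s,t \in [0,1]$ with $s+t=1$. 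Since $p < 1$ and $s \leq 1$, we have $s^p \geq s$, and similarly $t^p \geq t$, hence $s^p + t^p \geq s + t = 1$. This gives the required subadditivity of $t \mapsto t^p$, and combined with the triangle inequality yields the upper inequality of the claim.

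For the lower bound, I would simply apply the upper bound already established to the decomposition $x = (x+y) + (-y)$. This gives
\[
\abs{x}^p = \abs{(x+y) + (-y)}^p \leq \abs{x+y}^p + \abs{-y}^p = \abs{x+y}^p + \abs{y}^p,
\]
which rearranges to $\abs{x}^p - \abs{y}^p \leq \abs{x+y}^p$, as required.

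I do not anticipate any real obstacle: the only place that requires care is justifying $(a+b)^p \leq a^p + b^p$ for $p \in [0,1)$, and the homogeneity/normalization argument above handles this cleanly without needing to invoke concavity or calculus. The proof is essentially a two-line argument plus a one-line substitution.
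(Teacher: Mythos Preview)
Your proposal is correct. The route differs from the paper's in two respects. First, the paper proves the subadditivity $(a+b)^p \leq a^p + b^p$ by a ratio argument: writing $f(p)=(a+b)^p$ and $g(p)=a^p+b^p$, it factors out $a^p$ and bounds the numerator $(1+b/a)^p$ and denominator $1+(b/a)^p$ separately to conclude $f(p)/g(p)<1$. Your normalization argument (divide by $(a+b)^p$ and use $s^p\geq s$ for $s\in[0,1]$) reaches the same conclusion more directly and without needing a WLOG on which of $a,b$ is larger. Second, the paper handles the two inequalities by a case split on the signs of $x$ and $y$: same signs gives the upper bound, opposite signs gives the lower bound via $(a-b)^p \geq a^p-b^p$, which it asserts is proved ``similarly.'' You instead reduce everything to nonnegative reals via the triangle inequality and then obtain the lower bound from the upper bound by the substitution $x=(x+y)+(-y)$, which avoids the sign case analysis entirely and makes the argument uniform. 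Both approaches are valid; yours is shorter and handles all sign configurations at once, while the paper's is more hands-on but leaves one case only sketched.
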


\begin{proof}
First, assume $x$ and $y$ are either both positive or negative.
For any $x,y \in \mathbb{R}^+$ consider functions $g_{x,y}(p) = x^p +y^p$ and $f_{x,y}(p)= (x+y)^p$. At $p=1,\forall x,y \in \mathbb{R}^+$, the two functions intersect as $x^1+y^1 =(x+y)^1$.  At $p=0$, $g_{x,y}(p) > f_{x,y}(p)$.  We want to prove that for $p \in [0.1)$,$g_{x,y}(p) \geq f_{x,y}(p)$.
For convenience, we drop the subscript $x,y$. 

WLOG assume $x> y$, then $f(p)= x^p\left(1+ \frac{y}{x}\right)^p$ and $g(p)= x^p\left(1+ (\frac{y}{x})^p\right)$
So,
\[\frac{f(p)}{g(p)} = \displaystyle \frac{\left(1+\frac{y}{x}\right)^p}{1+ \left(\frac{y}{x}\right)^p}\]
The numerator
\[\left(1+\frac{y}{x}\right)^p < 1 + \frac{y}{x},~\text{for}~ p \in [0,1), \forall~ \frac{y}{x} <1  \]
The denominator
\[1+ \left(\frac{y}{x}\right)^p > 1+ \frac{y}{x}, ~\text{for}~ p \in [0,1), \forall~ \frac{y}{x} <1 \]
\[\implies \frac{f(p)}{g(p)} < \frac{1+ \frac{y}{x}}{1+ \frac{y}{x}}=1\]
So for any $x, y \in \mathbb{R}^+$, we have $f_{x,y}(p) < g_{x,y}(p)$, for $ p \in [0,1)$.
Similarly, assume $x$ is postive and $y$ is negative, and WLOG assume $a=\abs{x}>\abs{y}=b$. Then $\abs{x+y}=a-b$, and we can similarly prove that for $a,b \in \mathbb{R}^+$, 
\[(a-b)^p \geq a^p-b^p\].

\end{proof}

\begin{lemma}\label{lemma:skpriv}
 With probability $1-\delta$, for any $j \in [r]$, with $0<p< \epsilon/Z <1$
\[\abs{\sk_j}^p -\xi \leq \abs{\skpriv_j}^p \leq \abs{\sk_j}^p + \xi\]
 where $\xi = \left(\left(\displaystyle \frac{2\cdot Z \max_j \infinorm{X_j}}{\alpha} \log(\frac{1}{\delta}\right)\right)^p$
\end{lemma}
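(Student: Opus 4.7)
The plan is to combine Claim~\ref{claim:bound} with a standard Laplace tail bound (presumably the content of the referenced Lemma~\ref{lemma:folklore}) applied to the noise $\eta_j$.

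First, by construction $\skpriv_j = \sk_j + \eta_j$. Applying Claim~\ref{claim:bound} to $x = \sk_j$ and $y = \eta_j$ (both real, $p \in [0,1)$), I immediately obtain the two-sided bound
\[
\abs{\sk_j}^p - \abs{\eta_j}^p \;\leq\; \abs{\skpriv_j}^p \;\leq\; \abs{\sk_j}^p + \abs{\eta_j}^p.
\]
So the whole task reduces to controlling $\abs{\eta_j}^p$ with high probability.

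Next, $\eta_j \sim \Lap(b_j)$ with scale $b_j = GS_j/\alpha = 2Z\infinorm{X_j}/\alpha$. The standard Laplace tail bound gives $\Pr[\abs{\eta_j} \geq b_j \log(1/\delta)] \leq \delta$. Hence with probability at least $1-\delta$,
\[
\abs{\eta_j} \;\leq\; \frac{2Z\infinorm{X_j}}{\alpha}\log(1/\delta) \;\leq\; \frac{2Z\max_j \infinorm{X_j}}{\alpha}\log(1/\delta).
\]
Since $x \mapsto x^p$ is monotone nondecreasing on $[0,\infty)$ for $p > 0$, raising both sides to the $p$-th power yields $\abs{\eta_j}^p \leq \xi$, with $\xi$ as defined in the lemma statement.

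Plugging this into the two-sided bound from the first step gives the claimed inequality. The only substantive ingredient is Claim~\ref{claim:bound}, which is already established; everything else is a one-line Laplace concentration plus monotonicity of $x^p$, so I do not anticipate any real obstacle. The one small thing to double-check is that Claim~\ref{claim:bound} is stated for all real $x,y$ (not just the positive case handled in its proof body), so that it legitimately applies to the random sign of $\eta_j$; assuming this is indeed what the claim asserts, the lemma follows directly.
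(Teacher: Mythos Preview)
Your proposal is correct and follows essentially the same two-step argument as the paper: apply Claim~\ref{claim:bound} to $\abs{\sk_j+\eta_j}^p$, then bound $\abs{\eta_j}$ via the standard Laplace tail $\Pr[\abs{\eta_j}\geq b_j\log(1/\delta)]\leq\delta$ with $b_j=GS_j/\alpha$, and pass to the $p$-th power by monotonicity. One small correction: your parenthetical guess that Lemma~\ref{lemma:folklore} is the Laplace tail bound is off --- that lemma is a separate statement about medians used later in the analysis --- but this does not affect your argument, since the Laplace tail bound is simply invoked directly.
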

\begin{proof}
 We have $\abs{\skpriv_j}^p = \abs{\sk_j + \eta_j}^p$. From Claim~\ref{claim:bound}, we have:
\[\abs{\sk_j}^p -\abs{\eta_j}^p \leq \abs{\skpriv_j}^p \leq \abs{\sk_j}^p + \abs{\eta_j}^p.\]
Also since $\eta_j$ is drawn from a Laplacian distribution, we know that with probability $1-\delta$, 
$\abs{\eta_j} < \frac{GS_j}{\alpha} \cdot \log(\frac{1}{\delta})\leq \frac{2 Z \max_j \infinorm{X_j}}{\alpha} \cdot\log(\frac{1}{\delta}) $.
\end{proof}

Since Algorithm~\ref{alg1} computes $\Dappr$ by taking the (scaled) median of the $\skpriv_j$'s and Lemma~\ref{lemma:median} relates the median of the $\sk_j$'s to the $\sum_i \abs{\ai}^p$, we need to bound $\median_j \skpriv_j$ in terms of $\median_j \sk_j$.

\begin{lemma}\label{lemma:folklore}
  Let $x_1, \ldots, x_r$ and $y_1, \ldots, y_r$ be two sequences of
  real numbers satisfying $\forall i: x_i - E \leq y_i \leq x_i +
  E$. Then 
  \begin{equation*}
    \median_i{x_i} - E \leq \median_i{y_i} \leq \median_i{x_i} + E.
  \end{equation*}
\end{lemma}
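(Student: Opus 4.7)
The plan is to reduce the claim to a simpler monotonicity property of the median: if two sequences satisfy $u_i \leq v_i$ pointwise, then $\median_i u_i \leq \median_i v_i$. Granting this, the lemma follows in two lines: take $u_i = y_i$, $v_i = x_i + E$ to obtain the upper bound $\median_i y_i \leq \median_i (x_i + E) = \median_i x_i + E$ (using that adding a constant shifts the median by the same constant), and symmetrically take $u_i = x_i - E$, $v_i = y_i$ for the lower bound.

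So the only real content is the pointwise-monotonicity of the median. I would prove it by contradiction on order statistics. Fix the median index $k = \lceil r/2 \rceil$ so $\median_i u_i = u_{(k)}$ and $\median_i v_i = v_{(k)}$. Suppose for contradiction that $u_{(k)} > v_{(k)}$. Then the set $I = \{i : u_i \geq u_{(k)}\}$ has cardinality at least $r - k + 1$ by the definition of the $k$-th order statistic. For each $i \in I$ we have $v_i \geq u_i \geq u_{(k)} > v_{(k)}$, so $\{i : v_i > v_{(k)}\}$ also has cardinality at least $r - k + 1$. But by the definition of $v_{(k)}$ as the $k$-th smallest entry, at most $r - k$ of the $v_i$'s can strictly exceed $v_{(k)}$. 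This contradiction forces $u_{(k)} \leq v_{(k)}$.

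The main obstacle, such as it is, lies only in handling ties and the correct convention for the median when $r$ is even: one has to be careful that "$k$-th smallest" is well-defined and that the counting inequalities $|\{i : u_i \geq u_{(k)}\}| \geq r-k+1$ and $|\{i : v_i > v_{(k)}\}| \leq r-k$ hold as stated. For odd $r$ with $k = (r+1)/2$ this is immediate; for even $r$ one can either fix the convention $k = \lceil r/2 \rceil$ (in which case the same argument goes through verbatim) or note that the average-of-two-middle-values convention also preserves monotonicity because each of the two middle order statistics does, by the argument above. Everything else is a one-line invocation of translation invariance of the median.
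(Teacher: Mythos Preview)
Your proof is correct and takes a somewhat different, more modular route than the paper. The paper argues directly: after sorting the $x_i$'s, it locates the index $j$ with $y_j = \median_i y_i$ and shows that either $j \geq \lceil r/2 \rceil$ (in which case $y_j \geq x_j - E \geq x_{\lceil r/2\rceil} - E$ immediately), or else some index $\ell \geq \lceil r/2\rceil$ must satisfy $y_\ell \leq y_j$, and then $y_j \geq y_\ell \geq x_\ell - E \geq x_{\lceil r/2\rceil} - E$. Your approach instead factors the statement into two reusable pieces: pointwise monotonicity of the median (if $u_i \leq v_i$ for all $i$ then $\median_i u_i \leq \median_i v_i$) and translation invariance. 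The counting argument you give for monotonicity is clean and handles ties correctly via the inequalities $|\{i : u_i \geq u_{(k)}\}| \geq r-k+1$ and $|\{i : v_i > v_{(k)}\}| \leq r-k$. What your decomposition buys is a slightly more transparent proof and a standalone lemma one could reuse; what the paper's direct argument buys is avoiding the extra abstraction layer. Both are equally valid and of comparable length.
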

\begin{proof}
  Assume, WLOG, that $x_1, \ldots, x_r$ are sorted in increasing order
  and $\median_i{x_i} = x_{\lceil r/2 \rceil}$. Let $\median_i{y_i} =
  y_j$. We will prove that $y_j \geq x_{\lceil r/2 \rceil} - E$, and
  the other side of the inequality will follow by an analogous argument.

  If $j \geq \lceil r/2 \rceil$, then $y_j \geq x_j - E \geq x_{\lceil
    r/2 \rceil} - E$. Therefore, we may assume $j < \lceil r/2
  \rceil$. Because $y_j$ has rank $\lceil r/2 \rceil$ in $y_1, \ldots,
  y_r$, there exist indices $k_1, \ldots, k_{d} > j$, where $d =
  \lceil r/2 \rceil - j$, s.t.~$y_{k_1}, \ldots, y_{k_{d}} \leq
  y_j$. At least one of ${k_1}, \ldots, {k_d}$ is greater than or
  equal to $\lceil r/2 \rceil$; let the smallest such index be $\ell$.
  Then we  have,
  \begin{equation*}
    y_j \geq y_\ell \geq x_\ell - E \geq x_{\lceil r/2 \rceil} - E.
  \end{equation*}
\end{proof}

Now we prove that $\Dappr$, returned by Algorithm~\ref{alg1} gives a good approximation to $\Dt$:

\begin{lemma}\label{bi-apprx}
Algorithm~\ref{alg1} computes an $\alpha'$-pan private 
approximation of $\Dt$ using space $O(1/\epsilon^2 \log(1/\delta))$. The approximation guarantee is: 
 With probability at least $1-(r+1)\delta$ and with $\xi$ as in Lemma~\ref{lemma:skpriv},
\[(1-\epsilon)\Dt -\xi\cdot \sfp(p) \leq \Dappr \leq (1+\epsilon)\Dt+ \xi \cdot \sfp(p)\]
where $r=O(1/\eps^2 \cdot \log 1/\delta)$.
\end{lemma}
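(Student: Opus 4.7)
The plan is to chain together four approximation lemmas/equations already in the paper, carefully tracking both the failure probabilities (union-bounded to $(r+1)\delta$) and the mixed multiplicative-plus-additive errors. The privacy portion of the claim and the space bound are essentially bookkeeping; the accuracy portion is where the real work lies.

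First, I would bound the effect of the Laplace noise. Lemma~\ref{lemma:skpriv} tells me that for each fixed $j$, with probability at least $1-\delta$, $\bigl| |\skpriv_j|^p - |\sk_j|^p \bigr| \leq \xi$. A union bound over the $r$ coordinates of the sketch gives this simultaneously for all $j$, at a probability cost of $r\delta$. Next, I apply Lemma~\ref{lemma:folklore} with $x_j = |\sk_j|^p$, $y_j = |\skpriv_j|^p$, and $E = \xi$ to transfer the coordinate-wise additive bound to the medians:
\begin{equation*}
  \median_j |\sk_j|^p - \xi \;\leq\; \median_j |\skpriv_j|^p \;\leq\; \median_j |\sk_j|^p + \xi.
\end{equation*}
Third, I invoke Lemma~\ref{lemma:median} (which fails with probability at most $\delta$) to relate $\median_j |\sk_j|^p$ to $\sfp(p) \sum_i |\ai|^p$ up to the multiplicative factor $(1\pm\eps)^p$. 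Finally, Equation~\ref{eq:Dt} tells me that for the chosen range $0 < p < \eps/\log Z$, $\Dt \leq \sum_i |\ai|^p \leq (1+\eps)\Dt$.

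Composing these four approximations and using $(1\pm\eps)^p \leq 1 \pm \eps$ (valid for the regime $p < 1$ and small $\eps$), I pick up $\sfp(p) \Dt \pm \eps \sfp(p) \Dt$ on the sketch side, and the additive $\xi$ from the noise turns into $\xi \sfp(p)$ after I scale the median by $\sfp(p)$ to form $\Dappr$. A final union bound over the events in Lemmas~\ref{lemma:skpriv} and~\ref{lemma:median} gives total failure probability at most $(r+1)\delta$, which yields the stated two-sided bound. For pan-privacy, each $\skpriv_j$ is $\alpha$-differentially private by Theorem~\ref{thm:Laplace} since the noise is Laplace-distributed with scale $GS_j/\alpha$ and the sensitivity computation in the preceding paragraph has already established $GS_j \leq 2Z\infinorm{X_j}$; composition (Theorem~\ref{composition}) then gives $\alpha' = r\alpha$ privacy for the joint internal state $\skpriv$, and since $\Dappr$ is a deterministic post-processing of that state, the output is private as well, yielding $\alpha'$-pan privacy per Definition~\ref{defn:panprivacy}. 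The space is $O(r) = O(\eps^{-2}\log(1/\delta))$, dominated by the sketch vector.

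The main obstacle is not any single step but the careful composition of multiplicative error from Lemma~\ref{lemma:median} (bounded by $(1\pm\eps)^p$), the further multiplicative error from Equation~\ref{eq:Dt} (a factor of $(1+\eps)$), and the additive noise $\xi$ coming from the Laplace perturbation. I need to make sure the resulting inequality has the clean form $(1-\eps)\Dt - \xi\sfp(p) \leq \Dappr \leq (1+\eps)\Dt + \xi\sfp(p)$ rather than something with compounded $(1\pm\eps)^2$ or $(1\pm\eps)^{1+p}$ factors, which will force me either to absorb constants into the definition of $\eps$ or to appeal to $(1\pm\eps)^p \leq 1\pm\eps$ and $p\log Z < \eps$ to linearize the exponents. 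A minor subtlety is that the "multiply by $\sfp(p)$" step in the algorithm should be read as the appropriate rescaling so that Lemma~\ref{lemma:median} delivers an estimator of $\sum_i|\ai|^p$ rather than of $\sfp(p)\sum_i|\ai|^p$; once that is fixed, the additive term $\xi$ in the median bound scales by the same factor, giving the $\xi\sfp(p)$ additive slack in the final statement.
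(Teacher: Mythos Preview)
Your proposal is correct and follows essentially the same route as the paper: establish $\alpha'$-pan privacy via the per-sketch Laplace noise and composition, then for accuracy union-bound Lemma~\ref{lemma:skpriv} over all $r$ coordinates, push the resulting coordinate-wise bound through the medians via Lemma~\ref{lemma:folklore}, and finish by invoking Lemma~\ref{lemma:median} together with Equation~\ref{eq:Dt}. Your write-up is in fact more explicit than the paper's about the $(1\pm\eps)^p$ linearization and the $\sfp(p)$ scaling, but the underlying argument is identical.
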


\begin{proof}
Since each sketch $\skpriv_j$ is $\alpha$-differentially private according to the sensitivity method of~\cite{DMNS}, and we have $r$ such sketches, the over all privacy of the Algorithm is $\alpha r=\alpha'$. Each sketch is a differentially private description of the state and hence the algorithm achieves $\alpha'$ pan-privacy. Nowe we prove the approximation guarantee:

 We have $\Dappr=\median_j \skpriv_j \cdot \sfp(p)$. Using Lemma~\ref{lemma:skpriv} we have with probability at least $1-r\cdot \delta~\forall j$ simulatenously:
\[\abs{\sk_j}^p - \xi \leq \abs{\skpriv_j}^p \leq \abs{\sk_t}^p + \xi. \]

From Lemma~\ref{lemma:folklore}, we have with probability at least $1-r \delta$:

So we have with probability $1-r\delta$
\[\median_j \abs{\sk_j}^p -\xi \leq \median_j \abs{\skpriv_j}^p \leq \median_j \abs{\sk_j}^p + \xi.\]
Using Lemma~\ref{lemma:median} and Eqaution~\ref{eq:Dt} and noting that $\alpha' = \alpha r$, the result follows.
\end{proof}

Since $p < \epsilon/\log Z<1$, and $r$, the number of sketches is polylogarithmic in $m$, and $\sfp(p)$ is a constant, from Lemma~\ref{bi-apprx}, we have:

\begin{theorem}\label{dis-up}
With probability $1-(r+1)\delta$, Algorithm~\ref{alg1} computes an $\alpha'$-pan-private 
approximation $\Dappr $ of $\Dt$ such that 
\[ (1-\epsilon)\Dt -O\left(\poly\left(\log(m)\cdot (1+\eps) \log(\frac{1}{\delta}) \frac{1}{\alpha'}\right)\right) \leq \Dappr \leq (1+\epsilon)\Dt +O\left(\poly\left(\log(m)\cdot (1+\eps) \log(\frac{1}{\delta}) \frac{1}{\alpha'}\right)\right)\]
\end{theorem}
\begin{proof}
 We have,
\[\xi.\sfp(p) = O\left(\left(\frac{1}{\alpha'} r \cdot 2 Z\max_j \infinorm{X_j} \log(\frac{1}{\delta})\right)^p\right)\]
and $Z^p < e^\epsilon$, which for small $\epsilon$ is less than $(1+ \eps)$. Since $r$ is $\polylog$ in $m$ and $\max_j \infinorm{X_j}$ is a constant, the result follows. 

\end{proof}

In fact, this algorithm is a streaming algorithm since it stores polylogarithmic in $m$ space and takes time
polylogarithmic in $m$ per new update. Technically, it works in the {\em turnstile} model since $d_j$ may be 
positive or negative, the first such pan-private streaming
algorithm~\cite{M}.

The best previous result for pan-private distinct count estimation is
due to Dwork et al.~\cite{dwork-pan}. Their algorithm outputs and
estimate in $[\Dt - \alpha'm, \Dt + \alpha'm]$ with probability
$1-\delta$ for any constant $\alpha$ and $\delta$. By extending their
techniques and running their algorithm in full space, we can get an
estimate in $[\Dt - O(\sqrt{m}), \Dt + O(\sqrt{m})]$ with constant
probability (see Section~\ref{HH}). Our sketching algorithm achieves a
significantly smaller error whenever $\Dt = o(\sqrt{m})$; we note that
in practice the distinct counts statistic is usally much smaller than
the size of the universe.

\subsection{Lower bounds}
\label{LB}

Next we present lower bounds against pan-private algorithms that allow
a single intrusion. These are the first such lower bounds in the
literature and may be of independent interest.

We show that if only an additive approximation is allowed, the full
space extension of Dwork et al.'s algorithm for distinct count
estimation, as presented in Section~\ref{HH}, is optimal. Thus, the
multiplicative approximation factor in the analysis of our sketching
distinct counts algorithm is necessary. Furthermore, by proving a new
noisy decoding theorem, we show that our sketching algorithm gives an
almost optimal bi-approximation guarantee. Interestingly, our lower
bounds make no assumptions on the space complexity of the algorithm,
and yet the (almost) optimal algorithm happens to use polylogarithmic
space. 

\paragraph{Dinur-Nissim Style Decoding}
Our lower bounds utilize a decoding algorithm of the style introduced
in a privacy context by Dinur and
Nissim~\cite{dinur2003revealing}. Informally, we argue that the
(private) state of an accurate pan private algorithm can be used to
recover the majority of the algorithm's input. First, we introduce the
decoding results we will use. 

\begin{theorem}[\cite{dinur2003revealing}]
  \label{thm:din-nis}
  Let $\vx \in \{0, 1\}^n$. For any $\epsilon$ and $n \geq n_\epsilon$,
  the following holds. Given $O(n \log^2 n)$ random strings $\vq_1,
  \ldots, \vq_t \in_R \{0, 1\}^n$, and approximate answers $\tilde{\va}_1,
  \ldots, \tilde{\va}_t$ s.t.~$\forall i \in [t]: |\vx\cdot \vq_i -
  \tilde{\va}_i| = o(\sqrt{n})$, there exists an algorithm that outputs
  a string $\tilde{\vx} \in \{0, 1\}^n$ and except with negligible
  probability $||\vx - \tilde{\vx}||_0 \leq \epsilon n$.
\end{theorem}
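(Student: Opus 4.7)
The plan is to recover $\vx$ via a feasibility program rather than an explicit reconstruction formula, and then to rule out "far" candidate solutions by a union bound driven by anti-concentration. Concretely, the decoding algorithm searches (by brute force over $\{0,1\}^n$, or by solving an LP relaxation in $[0,1]^n$ and rounding coordinate-wise at $1/2$) for any $\tilde{\vx} \in \{0, 1\}^n$ satisfying $|\vq_i \cdot \tilde{\vx} - \tilde{\va}_i| = o(\sqrt{n})$ for every $i \in [t]$. Such a solution exists because the true $\vx$ already satisfies every constraint by hypothesis, so the algorithm returns some feasible $\tilde{\vx}$.

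The soundness argument then shows that every feasible $\tilde{\vx}$ is close to $\vx$. Fix a candidate $\tilde{\vx}$ and set $\vect{y} = \tilde{\vx} - \vx \in \{-1, 0, 1\}^n$. By the triangle inequality, $|\vect{y} \cdot \vq_i| \leq |\tilde{\vx} \cdot \vq_i - \tilde{\va}_i| + |\vx \cdot \vq_i - \tilde{\va}_i| = o(\sqrt{n})$ for every $i$. Thus it suffices to argue that, with probability at least $1 - o(1)$ over the random queries $\vq_1, \ldots, \vq_t$, no $\vect{y} \in \{-1,0,1\}^n$ with $\|\vect{y}\|_0 > \epsilon n$ simultaneously satisfies $|\vect{y}\cdot \vq_i| = o(\sqrt{n})$ for all $i$.

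The heart of the proof is an anti-concentration lemma: for any fixed $\vect{y}$ with support of size $k \geq \epsilon n$ and $\vq \in_R \{0,1\}^n$, there exist constants $c, \gamma > 0$ (depending only on $\epsilon$) such that $\Pr[|\vect{y}\cdot \vq| \leq c\sqrt{n}] \leq 1 - \gamma$. I would establish this by splitting on the mean $\mu = E[\vect{y}\cdot \vq] = (k_+ - k_-)/2$, where $k_\pm$ count the $\pm 1$ coordinates of $\vect{y}$. If $|\mu| \geq 2c\sqrt{n}$, then by a Chernoff bound on $\vect{y}\cdot \vq$ (whose variance is $k/4 \geq \epsilon n/4$), the value lies within $c\sqrt{n}$ of zero with only exponentially small probability. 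If $|\mu| < 2c\sqrt{n}$, then $\vect{y}\cdot \vq$ is a difference of two independent Binomials with total variance $\Omega(\epsilon n)$, and standard Berry--Esseen (or a direct Paley--Zygmund-style argument) shows that it escapes any interval of length $O(\sqrt{n})$ with probability bounded away from $0$.

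With this lemma in hand, the independence of the queries gives $\Pr[\forall i:\ |\vect{y}\cdot\vq_i| \leq c\sqrt{n}] \leq (1-\gamma)^t$ for any fixed bad $\vect{y}$. A union bound over the at most $3^n$ vectors in $\{-1, 0, 1\}^n$ with support $> \epsilon n$ gives total failure probability at most $3^n(1-\gamma)^t$, which is negligible once $t = \Omega(n/\gamma)$; the stated $t = O(n \log^2 n)$ is comfortably sufficient (the extra $\log^2 n$ slack is what allows $c = o(1)$ so the constraint $|\vect{y}\cdot \vq_i| = o(\sqrt{n})$ can be absorbed). Combined with the LP feasibility argument, this yields $\|\tilde{\vx} - \vx\|_0 \leq \epsilon n$ except with negligible probability. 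The main obstacle I expect is the anti-concentration step, since handling the case $|\mu| \ll \sqrt{n}$ requires a non-trivial lower bound on $\Pr[|\vect{y}\cdot\vq - \mu| \geq \Omega(\sqrt{n})]$ that is uniform in the signed structure of $\vect{y}$; everything else (feasibility, triangle inequality, union bound) is routine.
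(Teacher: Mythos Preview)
The paper does not prove this theorem at all; it is quoted verbatim as a result of Dinur and Nissim~\cite{dinur2003revealing} and used as a black box in the lower-bound arguments of Section~\ref{LB}. So there is no ``paper's own proof'' to compare against.

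That said, your sketch is essentially the standard Dinur--Nissim reconstruction argument: output any feasible $\tilde{\vx}$, reduce soundness to showing that for every $\vect{y}=\tilde{\vx}-\vx$ with large support the random inner products $\vq_i\cdot\vect{y}$ cannot all be $o(\sqrt{n})$, prove a per-query anti-concentration bound giving constant escape probability, and finish with independence plus a union bound over $\{-1,0,1\}^n$. This is exactly the route taken in~\cite{dinur2003revealing}, and your identification of the anti-concentration step as the only nontrivial ingredient is accurate. One minor point: in the original proof the LP relaxation is what makes the decoder efficient (polynomial time rather than $2^n$ brute force), so if you want the algorithmic claim and not just existence you should commit to the LP-and-round version and check that rounding at $1/2$ preserves feasibility up to constants; this is routine but worth stating.
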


In follow up work, ~\cite{dwork2007price} strengthened
the above and showed that decoding is possible even
when a constant fraction of the queries are inaccurate.

\begin{theorem}[\cite{dwork2007price}]
  \label{thm:lp-dec}
  Given $\rho < \rho^*$, where $\rho*$ is a constant approximately
  equal to $0.239$, there exists a constant $\epsilon$ s.t.~the
  following holds. Let $\vx \in \{0, 1\}^n$. There exists a matrix $A
  \in \{-1, 1\}^{n \times m}$ for some $m = O(n)$ and an efficient
  algorithm $\mathcal{A}$, s.t.~on input $\tilde{b} \in \mathbb{N}^m$,
  satisfying $|\{i: |(A\vx - \tilde{b})_i| > \alpha\}| \leq \rho$,
  $\mathcal{A}$ outputs $\tilde{\vx} \in \{0, 1\}^n$ and with
  probability $1-e^{-O(m)}$, $||\vx - \tilde{\vx}||_0 \leq \epsilon \alpha^2$
\end{theorem}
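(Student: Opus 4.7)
The plan is to instantiate $A$ as a uniformly random matrix with entries in $\{-1,1\}$, with $m = Cn$ rows for a sufficiently large constant $C=C(\rho^*)$, and to take $\mathcal{A}$ to be the LP decoder that solves
\[
\hat{\vx} \;=\; \arg\min_{\vect{y}\in[0,1]^n}\;\norm{A\vect{y}-\tilde{b}}_1,
\]
and then outputs $\tilde{\vx}\in\{0,1\}^n$ obtained from $\hat{\vx}$ by coordinatewise rounding at $1/2$. The $\epsilon\alpha^2$ guarantee will come from combining the optimality of $\hat{\vx}$ with a ``restricted invertibility'' property of the random $\pm 1$ matrix.

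The core structural statement I would establish is as follows. For a sufficiently small constant $c_0>0$ and sufficiently large $C$, with probability at least $1-e^{-\Omega(m)}$ over $A$, every nonzero $\vect{z}\in\{-1,0,1\}^n$ and every $G\subseteq[m]$ with $|G|\ge(1-\rho^*)m$ satisfy
\[
\norm{(A\vect{z})_G}_1 \;\ge\; c_0\,|G|\sqrt{\norm{\vect{z}}_0}.
\]
The single-pair probability bound has two pieces: if $\norm{\vect{z}}_0=k$ then each coordinate $(A\vect{z})_i$ is an independent $\pm 1$ random walk of length $k$, so anti-concentration gives $\Pr\bigl[|(A\vect{z})_i|\ge c_0\sqrt{k}\bigr]\ge 1-p_{c_0}$ with $p_{c_0}<1-\rho^*$ for $c_0$ small; then a Chernoff bound across the $m$ independent rows shows that all but an $e^{-\Omega(m)}$ fraction of $A$'s realisations have at most $\rho^* m$ ``small'' coordinates, so the claimed $\ell_1$ lower bound holds on every $G$ of size at least $(1-\rho^*)m$. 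A union bound over the $\le 3^n$ choices of $\vect{z}$ and $\le 2^m$ choices of $G$ closes provided $C$ is large enough that the exponent $-\Omega(m)$ dominates $n\log 3 + m\log 2$; this simultaneously pins down $\rho^*$ and the aspect ratio $m/n$.

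Granted the structural lemma, decoding correctness is short. Let $G=\{i:|(A\vx-\tilde{b})_i|\le\alpha\}$; by hypothesis $|G|\ge(1-\rho)m\ge(1-\rho^*)m$. LP optimality gives $\norm{A\hat{\vx}-\tilde{b}}_1\le\norm{A\vx-\tilde{b}}_1$, and two applications of the triangle inequality restricted to $G$ yield
\[
\norm{(A(\hat{\vx}-\vx))_G}_1 \;\le\; 2\,\norm{(A\vx-\tilde{b})_G}_1 \;\le\; 2\alpha|G|.
\]
Setting $\vect{z}=\tilde{\vx}-\vx\in\{-1,0,1\}^n$ (handling the rounding step by either passing to a vertex of the extended LP so that $\hat{\vx}$ itself is integral, or extending the structural lemma to $\vect{z}\in[-1,1]^n$ via a covering-net argument that only perturbs the constants, and then noting that rounding at $1/2$ flips at most $2\norm{\hat{\vx}-\vx}_1$ coordinates) and substituting into the structural lemma gives $c_0\sqrt{\norm{\vect{z}}_0}\le 2\alpha$, i.e.\ $\norm{\vx-\tilde{\vx}}_0\le(2\alpha/c_0)^2$. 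Choosing $\epsilon=(2/c_0)^2$ yields the claim, and the failure probability is $e^{-\Omega(m)}$ as required.

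The main obstacle is nailing down the numerical constant $\rho^*\approx 0.239$: it is the largest error-fraction for which the Chernoff exponent for the number of small coordinates of $A\vect{z}$ beats the entropy term $n\log 3+m\log 2$ in the union bound, jointly optimised over the anti-concentration threshold $c_0$ and the aspect ratio $C=m/n$. This is essentially the large-deviation/LP-duality calculation carried out in the original \cite{dwork2007price} analysis; once that optimisation is done everything else in the proof is routine.
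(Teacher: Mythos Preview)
The paper does not prove this theorem; it is quoted from \cite{dwork2007price} and used as a black box in the lower-bound section. So there is no ``paper's own proof'' to compare against here. Your proposal is an attempt to reconstruct the original Dwork--McSherry--Talwar argument, and the high-level plan (random $\pm 1$ matrix, LP decoder, anti-concentration plus union bound) is indeed theirs.

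That said, there is a real gap in your decoding step. From LP optimality you only get $\norm{A\hat{\vx}-\tilde b}_1 \le \norm{A\vx-\tilde b}_1$ on the \emph{full} index set $[m]$. You cannot pass from this to $\norm{(A(\hat{\vx}-\vx))_G}_1 \le 2\norm{(A\vx-\tilde b)_G}_1$ by ``two applications of the triangle inequality restricted to $G$'': restricting to $G$ throws away the contribution on $\bar G=[m]\setminus G$, and the adversary is free to make $(A\vx-\tilde b)_{\bar G}$ arbitrarily large. The standard derivation instead yields
\[
\norm{(A\vect{z})_G}_1 \;-\; \norm{(A\vect{z})_{\bar G}}_1 \;\le\; 2\norm{(A\vx-\tilde b)_G}_1 \;\le\; 2\alpha|G|,
\]
with $\vect{z}=\hat{\vx}-\vx$. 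Consequently your structural lemma, which only \emph{lower}-bounds $\norm{(A\vect{z})_G}_1$, is not sufficient: you also need an \emph{upper} bound on $\norm{(A\vect{z})_{\bar G}}_1$ uniformly over all $\bar G$ of size at most $\rho m$, i.e.\ a statement that the mass of $A\vect{z}$ is well spread and no $\rho m$ coordinates can carry too much of the $\ell_1$ norm. In the original proof this is exactly the Kashin/Euclidean-section property of random $\pm 1$ matrices, and the optimisation that produces the specific threshold $\rho^*\approx 0.239$ balances the lower bound on $G$ against this upper bound on $\bar G$ --- not merely the entropy-versus-Chernoff tradeoff you describe. Without this second half of the structural lemma the argument does not close.
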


Next we will prove a result that is similar to Dinur and Nissim's but
uses ``union queries'' as opposed to dot product queries. 

\begin{theorem}
  \label{thm:union-dec}
  Let $\vx \in \{0, 1\}^n$, $||\vx||_0 \leq C\log^c n$ for some constants
  $c$ and $C$. For any $\epsilon$ and $n \geq n_\epsilon$ the following
  statement holds. There exists $n^{O(log^c n)}$ binary strings $\vq_1,
  \ldots, \vq_t \in \{0, 1\}^n$ and an algorithm $\mathcal{A}$ such that
  given answers $\tilde{\va}_1, \ldots, \tilde{\va}_1$ satisfying
  $$
  \forall i: (1-\alpha_1)||\vx + \vq_i||_0 - \alpha_2 \leq \tilde{\va_i}
  \leq (1+\alpha_1)||\vx + \vq_i||_0 + \alpha_2
  $$
  for $\alpha_2 = o(\log^c n)$, $\mathcal{A}$ outputs $\tilde{\vx}$ with
  $||\vx - \tilde{\vx}||_0 \leq \frac{16(\alpha_1 + \epsilon)}{1 -
    \alpha_1} C \log^c n$.
 \end{theorem}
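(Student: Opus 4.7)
The strategy is to use \emph{enumeration over sparse supports} both as the query set and as the decoder's search space. Since $\vx$ has at most $k := C\log^c n$ nonzero coordinates, we can afford to list every subset of $[n]$ of weight $\leq k$: there are $\sum_{j=0}^{k}\binom{n}{j} \leq (k+1) n^k = n^{O(\log^c n)}$ of them, which fits the query budget.

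Concretely, I would take as queries all indicator vectors $\vq_S = \mathbf{1}_S$ with $|S| \leq k$, and denote the corresponding returned approximation by $\tilde{a}_S$. For binary $\vx$ and $\vq_S$ one has $\|\vx + \vq_S\|_0 = |S^* \cup S|$, where $S^* := \mathrm{supp}(\vx)$. The decoder $\mathcal{A}$ brute-force searches for any $\hat{S} \subseteq [n]$ with $|\hat{S}| \leq k$ that is \emph{consistent} with every reported answer in the sense
\begin{equation*}
(1-\alpha_1)|\hat{S} \cup S| - \alpha_2 \;\leq\; \tilde{a}_S \;\leq\; (1+\alpha_1)|\hat{S} \cup S| + \alpha_2
\end{equation*}
for all $S$ with $|S| \leq k$, and then outputs $\tilde{\vx} := \mathbf{1}_{\hat{S}}$. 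By hypothesis $S^*$ itself is consistent, so the search always succeeds.

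The heart of the analysis is proving that any consistent $\hat{S}$ is forced to be close to $S^*$. The key observation is that both $\vq_{\hat{S}}$ and $\vq_{S^*}$ lie in the query set. Applying consistency of $\hat{S}$ to the query $\vq_{\hat{S}}$ yields $\tilde{a}_{\hat{S}} \leq (1+\alpha_1)|\hat{S}| + \alpha_2$, while the hypothesis applied to the same query gives $\tilde{a}_{\hat{S}} \geq (1-\alpha_1)(|\hat{S}| + |S^* \setminus \hat{S}|) - \alpha_2$. Combining and rearranging produces $|S^* \setminus \hat{S}| \leq \frac{2(\alpha_1 k + \alpha_2)}{1-\alpha_1}$. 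By symmetry, pairing consistency of $\hat{S}$ at the query $\vq_{S^*}$ with the hypothesis bound on $\tilde{a}_{S^*}$ gives $|\hat{S} \setminus S^*| \leq \frac{2(\alpha_1 k + \alpha_2)}{1-\alpha_1}$. Summing and using $\alpha_2 = o(\log^c n) \leq \epsilon k$ for $n \geq n_\epsilon$ gives $\|\vx - \tilde{\vx}\|_0 = |\hat{S} \bigtriangleup S^*| \leq \frac{16(\alpha_1 + \epsilon)}{1-\alpha_1} C\log^c n$ after absorbing small constants.

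The main subtlety, and the step I expect to be most delicate, is arranging the two \emph{multiplicative} inequalities so that the $|\hat{S}|$ (resp.~$|S^*|$) term on each side cancels, leaving only a single-sided symmetric difference. Choosing which direction of ``consistency'' to pair with which direction of ``truth'' at each of $\vq_{\hat{S}}$ and $\vq_{S^*}$ is what produces a one-sided bound; a naive pairing yields only $|\hat{S} \cup S^*| \lesssim |\hat{S}| + |S^*|$, which is uninformative. The remaining pieces --- counting queries, noting that a consistent $\hat{S}$ always exists, and converting $\alpha_2 = o(\log^c n)$ into $\alpha_2 \leq \epsilon k$ --- are routine.
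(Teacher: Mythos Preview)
Your proposal is correct and uses the same framework as the paper: the query set is exactly the indicator vectors of all subsets of $[n]$ of weight at most $L = C\log^c n$ (the paper also throws in $\vq_0 = \mathbf{0}$), and the decoder is the same brute-force search for a consistent sparse $\tilde{\vx}$. Where you differ is in the analysis of why any consistent $\hat{S}$ must be close to $S^*$. The paper proceeds by a case split: first it uses the empty query $\vq_0$ to rule out $\bigl|\,\|\vx\|_0 - \|\tilde{\vx}\|_0\,\bigr| > b$, and then, in the balanced case, it picks the query supported on one half of the symmetric difference (either $\hat{S}\setminus S^*$ or $S^*\setminus\hat{S}$) and derives a contradiction. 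Your argument is more direct: you simply evaluate the two consistency--truth pairings at the queries $\vq_{\hat{S}}$ and $\vq_{S^*}$ themselves, which immediately bounds each half of the symmetric difference by $\frac{2(\alpha_1 k + \alpha_2)}{1-\alpha_1}$ without any case analysis. This is both cleaner and yields a tighter constant ($4$ rather than $16$) before you ``absorb small constants.'' Nothing is lost by your route; the paper's detour through $\vq_0$ and the half-symmetric-difference queries is unnecessary once one notices that $\vq_{\hat{S}}$ and $\vq_{S^*}$ are themselves in the query set.
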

 \begin{proof}
   Let $L$ be an upper bound on $||\vx||_0$, i.e.~$L = C \log^c n$. The
   set of queries is $\vq_0 = (0, \ldots, 0)$, and $\vq_1, \ldots, \vq_t$
   are the indicator vectors of all subsets of $[n]$ of size at most
   $L$. The algorithm outputs any string $\tilde{\vx}$
   s.t.~$||\tilde{\vx}||_0 \leq L$ and $\tilde{\vx}$ satisfies all the
   following constraints:
   $$
   \forall i: (1-\alpha_1)||\tilde{\vx} + \vq_i||_0 - \alpha_2 \leq \tilde{a_i}
   \leq (1+\alpha_1)||\tilde{\vx} + \vq_i||_0 + \alpha_2
   $$
   Clearly the algorithm terminates, as at least one string, i.e.~$\vx$
   satisfies all constraints. Choose $\epsilon$ so that $\alpha_2 \leq
   \epsilon L$. Next we argue that if $||\vx - \tilde{\vx}||_0 >
   \frac{16(\alpha_1 +\epsilon)}{1 - \alpha_1}L$, at least one of the
   above constraints is violated.

   We will consider several cases. Let $b = \frac{2(\alpha_1 +
     \epsilon)}{1-\alpha_1}L$. Assume first that $||\vx||_0 -
   ||\tilde{\vx}||_0 > b$. Then,
   \begin{align*}
   \tilde{a_0} &\geq (1-\alpha_1)||\vx||_0 - \alpha_2\\
   &> (1-\alpha_1)(||\tilde{\vx}||_0 + b) - \alpha_2\\
   &\geq (1 + \alpha_1)||\tilde{\vx}||_0 + \alpha_2 -
   (2\alpha_1||\tilde{\vx}||_0 + 2\alpha_2 - (1-\alpha_1) b)\\
   &\geq (1 + \alpha_1)||\tilde{\vx}||_0 + \alpha_2 -
   (2(\alpha_1 + \epsilon)L - (1 - \alpha_1) b)\\
   &\geq (1 + \alpha_1)||\tilde{\vx}||_0 + \alpha_2.
   \end{align*}
   We have shown that a constraint is violated by $\tilde{\vx}$ in this
   case. The case  $||\tilde{\vx}||_0  - ||\vx||_0 > b$ is argued
   analogously. 

   Finally, assume that $||\tilde{\vx}||_0 - ||\vx||_0 \in [-b, b]$. Let
   $\vq'$ be the indicator vector of the set $\{i: \vx_i = 0, \tilde{\vx}_i
   = 1\}$, and, similarly, let $\vq''$ be the indicator vector of the
   set $\{i: \vx_i = 1, \tilde{\vx}_i = 0\}$. Since, by assumption $||\vx -
   \tilde{\vx}||_0 = ||\vq'||_0 + ||\vq''||_0 > \frac{16(\alpha_1
     +\epsilon)}{1-\alpha_1}L$, it follows that $\max(||\vq'||_0,
   ||\vq''||_0) > \frac{8(\alpha_1 + \epsilon)}{1- \alpha_1}L$. Assume,
   without loss of generality, that $||\vq'||_0 > 8\frac{8(\alpha_1 +
     \epsilon)}{1-\alpha_1}$. We have the following identities:
   \begin{align*}
     ||\vq' + \vx||_0 &= ||\vx||_0 + ||\vq'||_0\\
     ||\vq' + \tilde{\vx}||_0 &\leq ||\tilde{\vx}||_0 + ||\vq'||_0 - \frac{4(\alpha_1 +
       \epsilon)}{1- \alpha_1}L\\
     ||\vq' + \vx||_0 - ||\vq' + \tilde{\vx}||_0 &\geq \frac{4(\alpha_1 +
       \epsilon)}{1- \alpha_1}L - b
   \end{align*}
   Let $\tilde{a}$ be the approximate answer to the query $||\vq' +
   \vx||_0$. 
   \begin{align*}
     \tilde{a} &\geq (1 - \alpha_1)||\vx+ \vq'||_0 - \alpha 2\\
     &> (1 - \alpha_1)(||\tilde{\vx} + \vq'||_0 + 2b - \frac{8(\alpha_1 +
     \epsilon)}{1-\alpha_1}L) - \alpha_2\\
     &= (1 + \alpha_1)||\tilde{\vx}+ \vq'||_0 + \alpha_2
     -(2\alpha_1||\tilde{\vx} + \vq'||_0 + 2\alpha_2 + 2(1 - \alpha_1)b -
     8(\alpha_1 + \epsilon)L)\\ 
     &\geq (1 + \alpha_1)||\tilde{\vx} + \vq'||_0 - \alpha_2\
     - (2\alpha_1L + 2\epsilon L + 4(\alpha_1 + \epsilon)L -
     8(\alpha_1 + \epsilon)L)\\ 
     &\geq  (1 + \alpha_1)||\tilde{\vx} + \vq'||_0 - \alpha_2
   \end{align*}
   Therefore, the constraint is violated and this completes the proof.
 \end{proof}
 
\paragraph{Lower Bounds from Noisy Decoding}
We introduce our approach to proving lower bounds for pan-private
algorithms using the most direct argument first: a lower bound against
dot product. We introduce the problem first.

\begin{problem}

  Input is a sequence of updates $S_t$ followed by a sequence $S'_t$.

  Output: Let $\vect{a}$ be the state of sequence $S_t$, and let
  $\vect{a'}$ be the state of $S'_t$. Output $\vect{a} \cdot \vect{a'}
  \pm \alpha = \sum_{i \in \Uni}{\ai \ai'} \pm \alpha, $ where
  $\alpha$ is an approximation factor.
\end{problem}

\begin{theorem}
  \label{thm:dot-prod-lb}
  Let $\mathcal{A}$ be a streaming algorithm that on input streams
  $S_t$, $S'_t$ outputs $\vect{a}\cdot\vect{a'} \pm o(\sqrt{m})$
  with probability at least $1 - O(m^{-2})$. Then $\mathcal{A}$ is not
  $\epsilon$-pan private for any constant $\epsilon$.
\end{theorem}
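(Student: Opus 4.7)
The plan is to convert any memory snapshot $M$ of $\mathcal{A}$ taken after it processes a prefix $S_t$ encoding a binary vector $\vx \in \{0,1\}^m$ into an oracle for arbitrary dot-product queries on $\vx$. Let $S_t$ be the stream that issues a single update $(i,1)$ for each $i$ with $\vx_i = 1$, so that the state vector is $\vect{a}^{(t)} = \vx$. For any query vector $\vq \in \{0,1\}^m$ the intruder builds $S'_t$ analogously from $(i,1)$ updates at the support of $\vq$ and simulates $\mathcal{A}$'s continuation starting from $M$ with fresh independent randomness; by the stated accuracy, the simulated output equals $\vect{a}\cdot\vect{a}' \pm o(\sqrt{m}) = \vx\cdot\vq \pm o(\sqrt{m})$ with probability at least $1 - O(m^{-2})$.

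Next I draw $T = O(m\log^2 m)$ i.i.d.\ uniform $\vq_1, \ldots, \vq_T \in \{0,1\}^m$ and run $T$ such simulated continuations from the one captured $M$, each with its own independent tail coins. A union bound gives the event ``all $T$ simulated answers are within $o(\sqrt{m})$ of the truth'' with probability $1 - O(T/m^2) = 1 - o(1)$, and on that event Theorem~\ref{thm:din-nis} recovers $\tilde{\vx} \in \{0,1\}^m$ with $\norm{\vx - \tilde{\vx}}_0 \leq \epsilon m$, for any preselected constant $\epsilon$, except with negligible probability. Bundle the whole procedure into a randomized decoder $D$, so that $\Pr[\norm{D(M)-\vx}_0 \leq \epsilon m \mid \vx] \geq 1 - o(1)$ uniformly over $\vx$.

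I then contradict $\epsilon_0$-pan-privacy by a packing-versus-group-privacy count. Let $\mathcal{C} \subseteq \{0,1\}^m$ be a weight-$m/2$ code of minimum Hamming distance greater than $2\epsilon m$; by Gilbert--Varshamov, $|\mathcal{C}| \geq 2^{m(1 - H(2\epsilon)) - o(m)}$. The sets $A_\vx = \{\mu : \norm{D(\mu) - \vx}_0 \leq \epsilon m\}$ are disjoint across $\vx \in \mathcal{C}$ and satisfy $\Pr[M \in A_\vx \mid \vx] \geq 1 - o(1)$. Fix any $\vx^* \in \mathcal{C}$; every other $\vx \in \mathcal{C}$ is reachable from $\vx^*$ by at most $m/2$ applications of Definition~\ref{def:neigh} (each swap moves one unit of mass between two IDs, altering two entries of the state vector), so composing Definition~\ref{defn:panprivacy} gives $\Pr[M \in A_\vx \mid \vx^*] \geq e^{-m\epsilon_0/2}(1 - o(1))$. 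Summing over $\vx \in \mathcal{C}$ and using disjointness forces $|\mathcal{C}| \leq (1 + o(1))\, e^{m\epsilon_0/2}$, contradicting the code bound for $\epsilon$ small and $\epsilon_0 < 2\ln 2$; I would lift the threshold to \emph{any} constant $\epsilon_0$ by sparsifying the packing to a distinguished $\polylog(m)$-sized block of coordinates, where the neighbor diameter shrinks much faster than the log-packing exponent.

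The step I expect to be most delicate is justifying the ``simulate from $M$'' reduction: the $1 - O(m^{-2})$ accuracy promised for one real end-to-end execution of $\mathcal{A}$ on $(S_t, S'_t)$ must transfer to the setting where $M$ is sampled once from the real execution through time $t$ and the post-intrusion coins are resampled independently for every query. The key observation is that the joint distribution $(M, \text{fresh tail coins})$ has the same marginal as a single real execution, so the per-query $1 - O(m^{-2})$ bound applies to each simulated query individually, and the union bound over $T$ queries then closes the argument cleanly. The remaining packing count is routine.
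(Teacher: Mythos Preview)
Your simulate-from-$M$ reduction and the application of Theorem~\ref{thm:din-nis} match the paper's argument exactly: the paper also captures the internal state $X$ once after $S_t$ and runs the continuation of $\mathcal{A}$ in parallel on $O(m\log^2 m)$ random second streams, closing with the same union bound over the $1 - O(m^{-2})$ per-query accuracy. Your discussion of why the per-query guarantee transfers to the ``one $M$, many fresh tails'' setting is correct and is precisely the point the paper glosses over.

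Where you diverge is in how you derive the contradiction with pan-privacy. The paper uses neither packing nor group privacy: it simply notes that the decoder is a post-processing of the $\epsilon$-differentially private state $X$, hence is itself $\epsilon$-differentially private in $S_t$, yet (by averaging over coordinates) predicts a single bit of $\vect{a}$ with probability at least $1 - 2\delta - \eta$. Since $\delta$ and $\eta$ may be taken arbitrarily small, and a single application of Definition~\ref{def:neigh} flips that bit, this violates $\epsilon$-pan-privacy for \emph{every} constant $\epsilon$ in one stroke.

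Your packing-plus-group-privacy argument is correct as far as it goes, but it only reaches $\epsilon_0 < 2\ln 2$, and the sparsification you propose to lift the threshold does not work. If you restrict the secret to a block $B$ of size $k = \polylog(m)$ and build the code inside $B$, then \emph{both} the neighbor diameter (at most $k/2$ swaps between weight-$k/2$ vectors on $B$) \emph{and} the log of the packing number (at most $k(1 - H(2\epsilon))\ln 2$) scale linearly in $k$; their ratio is unchanged and you recover the same bound $\epsilon_0 < 2\ln 2$. If instead you meant to keep the full-$m$ decoder while varying only the $B$-coordinates of the input, you lose elsewhere: Dinur--Nissim promises only $\eta m$ global errors, which swamps the $\polylog(m)$ coordinates of $B$, and restricting the query supports to $B$ would require $o(\sqrt{k})$ answer accuracy whereas $\mathcal{A}$ only promises $o(\sqrt{m})$. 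The clean fix is the paper's single-neighbor argument, which exploits directly that Dinur--Nissim lets you push $\eta$ arbitrarily close to zero.
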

\begin{proof}
  Fix a stream $S_t$ s.t.~$\forall i\in \Uni: \ai \in \{0, 1\}$. Let
  the internal state of the algorithm $\mathcal{A}$ after processing
  $S_t$ be $X$. By the definition of pan privacy, $I$ is
  $\epsilon$-differentially private with respect to $S_t$. Fix some
  constants $\delta$ and $\eta$. We will show that for all large
  enough $m$, any algorithm $\mathcal{Q}$ that takes as input $X$ and
  a stream $S'_t$ and outputs $\vect{a}\cdot\vect{a}' \pm o(\sqrt{m})$
  with probability at least $1 - O(m^{-2})$ can be used to recover
  $\ai$ exactly for all but an $\eta$ fraction of $i \in \Uni$ with
  probability $1-\delta$. Therefore, the existence of such an
  algorithm $\mathcal{Q}$ implies that $X$ cannot be
  $\epsilon$-differentially private for any fixed $\epsilon$. Indeed,
  assume for the sake of contradiction that an algorithm with the
  given properties exists and $X$ is $\epsilon$-differentially
  private. Since $\mathcal{Q}$ depends only on $X$ and not on $S_t$,
  the output of $\mathcal{Q}$ is also $\epsilon$-differentially
  private. This is a contradiction, since the output of $\mathcal{Q}$
  can be used to guess a bit of the binary vector $\vect{a}$
  accurately with probability at least $(1- 2\delta - \eta)$, where
  $\delta$ and $\eta$ can be chosen arbitrarily small.

  To finish the proof we show that an algorithm $\mathcal{Q}$ with the
  specified properties can be used to recover all but an $\eta$
  fraction of $\vect{a}$ with probability $1 - \delta$. To see this,
  observe that $\mathcal{Q}$ can be used to answer queries
  $\vect{a}\cdot\vq$ for any arbitrary $\vq$ to within $o(\sqrt{m})$
  additive error. In particular, to answer queries $\vect{a}\cdot
  \vq_1, \ldots, \vect{a}\cdot \vq_r$, run $\mathcal{Q}(X, S_t^{(1)}),
  \ldots, \mathcal{Q}(X, S_t^{(r)})$ in parallel, where $S_t^{(i)}$ is
  a stream with state $\vq_i$. If $r = o(n^2)$, then, by the union
  bound, with probability $1-\delta$ for any constant $\delta$,
  $\mathcal{Q}(X, S_t^{(i)}) = \vect{a} \cdot \vq_i \pm
  o(\sqrt{m})$. By Theorem~\ref{thm:din-nis}, there exists an
  algorithm that, given the output of $\mathcal{Q}(X, S_t^{(1)}),
  \ldots, \mathcal{Q}(X, S_t^{(r)})$, outputs $\tilde{\vect{a}}$
  s.t.~except with negligible probability $\tilde{\vect{a}}$ agrees
  with $\vect{a}$ on all but $\eta$ fraction of the coordinates.
\end{proof}

Notice that the lower bound relies on the fact that the updates for
$S_t$ arrive before any of the updates of $S'_t$. This restriction can
be relaxed. In general, we get a lower bound of $\Omega(\sqrt{m_0})$
for the additive error, where $m_0$ is the largest number of items in
$S$ that are updated before any of the corresponding items in
$S'$. The lower bound is interesting whenever the updates to the two
sequences of updates are not ``synchronized'', i.e.~$(i, d) \in S_t$
$(i, d') \in S'_t$ for the same $i \in \Uni$ are allowed to arrive at
different time steps.

Recall that the distinct count for $S_t$ is
$\Dt$. We have the following corollary.
\begin{cor}
  Let $\mathcal{A}$ be an online algorithm that on input 
  $S_t$ outputs $\Dt \pm o(\sqrt{m})$ with probability at least $1 -
  O(m^{-2})$. Then $\mathcal{A}$ is not $\epsilon$-pan private for any
  constant $\epsilon$. 
\end{cor}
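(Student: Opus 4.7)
The plan is to reduce distinct count to dot product and invoke Theorem~\ref{thm:dot-prod-lb}. The key identity is that for binary state vectors $\vect{a}, \vect{a}' \in \{0,1\}^m$, the state vector $\vect{a} + \vect{a}'$ of the concatenated stream satisfies
\[
\Dt(S_t \circ S_t') \;=\; |\{i : a_i + a_i' \neq 0\}| \;=\; \|\vect{a}\|_0 + \|\vect{a}'\|_0 - \vect{a}\cdot\vect{a}',
\]
by inclusion–exclusion, since $a_i \cdot a_i' = 1$ iff both entries are $1$. Thus, from two distinct-count estimates, one before and one after $S_t'$, we can read off $\vect{a} \cdot \vect{a}'$.

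Concretely, given $\mathcal{A}$ as in the corollary, I would construct a pan-private dot-product estimator $\mathcal{B}$ as follows. On input $(S_t, S_t')$ with binary state vectors $\vect{a}$ and $\vect{a}'$, feed the concatenation $S_t \circ S_t'$ into $\mathcal{A}$ and record its output at time $t$ as $\tilde{D}_1$ (approximating $\|\vect{a}\|_0$) and its output at time $t + t'$ as $\tilde{D}_{12}$ (approximating $\|\vect{a}+\vect{a}'\|_0$). Since $\vect{a}'$ is computable directly from $S_t'$, the value $\|\vect{a}'\|_0$ is known exactly, so $\mathcal{B}$ returns $\tilde{D}_1 + \|\vect{a}'\|_0 - \tilde{D}_{12}$. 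By the triangle inequality and a union bound, with probability $1 - O(m^{-2})$ this estimate is within $o(\sqrt{m})$ of $\vect{a}\cdot\vect{a}'$, satisfying the accuracy and success-probability hypotheses of Theorem~\ref{thm:dot-prod-lb}.

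What remains is to check that pan-privacy is preserved without composition loss. Both $\tilde{D}_1$ and $\tilde{D}_{12}$ are entries of $\mathcal{A}$'s pan-private output sequence on the single stream $S_t \circ S_t'$, and subtracting the publicly known quantity $\|\vect{a}'\|_0$ is post-processing; $\mathcal{B}$'s internal state at any time equals $\mathcal{A}$'s state on the corresponding prefix together with deterministic bookkeeping. Neighboring inputs $(S_t, S_t') \sim (\hat{S}_t, S_t')$ for $\mathcal{B}$ correspond to neighboring streams $S_t \circ S_t' \sim \hat{S}_t \circ S_t'$ for $\mathcal{A}$, so $\epsilon$-pan-privacy is inherited exactly, with no additional composition factor.

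The main thing to get right is the obstacle-free nature of this reduction: we want $\mathcal{B}$'s pan-privacy to follow from $\mathcal{A}$'s with no loss, which is why the inclusion–exclusion identity is ideal — it lets us extract $\vect{a} \cdot \vect{a}'$ from two post-processings of the same pan-private run of $\mathcal{A}$, rather than from two independent invocations. One minor caveat is that Theorem~\ref{thm:dot-prod-lb} is phrased for streaming algorithms while the corollary's $\mathcal{A}$ is merely online; however, inspection of that theorem's proof reveals that no streaming property is actually invoked (only that $\mathcal{A}$'s internal state after $S_t$ is $\epsilon$-differentially private in $\vect{a}$ and supports continued processing of $S_t'$), so the reduction goes through unchanged.
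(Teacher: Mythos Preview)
Your approach matches the paper's: reduce binary dot product to distinct count via the inclusion--exclusion identity $D(S' \circ S'') = D(S') + D(S'') - \vect{a}\cdot\vect{a}'$ and invoke Theorem~\ref{thm:dot-prod-lb}. The paper is content with a $3\epsilon$, $3\alpha$, $1-3\delta$ loss (three calls to $\mathcal{A}$, one each on $S'$, $S''$, and the concatenation), whereas you tighten this by noting that $\|\vect{a}'\|_0$ is known exactly to the adversary and by drawing $\tilde{D}_1$, $\tilde{D}_{12}$ from a single run.

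One caveat on your ``no composition loss'' step: you describe $\mathcal{B}$'s state after time $t$ as $\mathcal{A}$'s state plus ``deterministic bookkeeping,'' but the stored value $\tilde{D}_1$ is not deterministic---it is a random quantity depending on $S_t$. Your justification that $\tilde{D}_1$ is already an entry of $\mathcal{A}$'s pan-private output \emph{sequence} is fine if $\mathcal{A}$ is assumed to output at every time step, but the corollary's hypothesis literally promises only a single output on input $S_t$. Under that reading, carrying $\tilde{D}_1$ in $\mathcal{B}$'s state after time $t$ amounts to exposing $\mathcal{A}$ at two distinct moments (an output at $t$ and a state at some $\tau>t$), which single-intrusion pan-privacy of $\mathcal{A}$ does not directly cover. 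The fix is easy and you essentially state it in your final paragraph: bypass the black-box construction of $\mathcal{B}$ and argue directly at the attack level---intrude once to obtain state $X$, then simulate $\mathcal{A}$ from $X$ with the empty continuation to get $\tilde{D}_1$ and with each query stream to get $\tilde{D}_{12}^{(i)}$, all as post-processing of the single $\epsilon$-DP object $X$. Alternatively, just accept a constant-factor privacy loss as the paper does; the conclusion is unaffected.
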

\begin{proof}
  Notice that the proof of Theorem~\ref{thm:dot-prod-lb} goes through
  if we restrict the instances to be binary, i.e.~if we require that
  $\forall i \in [m]: \vect{a}, \vect{a}' \in \{0, 1\}$. The corollary
  follows by a reduction from this restricted dot-product problem to
  the distinct elements problem. Given binary streams $S'_t$, $S''_t$, let $S_t
  = (S'_t, S''_t)$ be their concatenation. By a simple application of
  inclusion-exclusion, $\Dt = \Dt(S') + \Dt(S'') -
  \vect{a}\cdot\vect{a}'$. Therefore, an $\epsilon$-pan private algorithm
  for $\Dt$ that achieves additive approximation $\alpha$ with
  probability $1-\delta$ implies a $3\epsilon$-pan private algorithm
  for dot product on binary instances that achieves additive
  approximation $3\alpha$ with probability $1-3\delta$.
\end{proof}

The next two theorems follow by arguments identical to the one used to
prove Theorem~\ref{thm:dot-prod-lb}, but using, respectively,
Theorem~\ref{thm:lp-dec} and Theorem~\ref{thm:union-dec} in place of
Theorem~\ref{thm:din-nis}.

\begin{theorem}
  Let $\mathcal{A}$ be an online algorithm that on inputs
  $S_t$, $S'_t$ outputs $\vect{a}\cdot\vect{a}' \pm o(\sqrt{m})$
  with probability at least $1 - \delta$. If $\delta < \rho^*/2(1+\eta)$
  for any $\eta$, then $\mathcal{A}$ is not $\epsilon$-pan private for
  any constant $\epsilon$.
\end{theorem}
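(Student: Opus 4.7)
The plan is to reuse the proof template of Theorem~\ref{thm:dot-prod-lb} essentially verbatim, substituting the noisy $L_p$-decoding result (Theorem~\ref{thm:lp-dec}) for Dinur--Nissim decoding (Theorem~\ref{thm:din-nis}). The crucial feature of $L_p$-decoding is that it tolerates a constant $\rho < \rho^*$ fraction of inaccurate queries, and this is precisely what lets us upgrade the required failure probability from the polynomially small $O(m^{-2})$ of Theorem~\ref{thm:dot-prod-lb} to a constant $\delta$, as long as $\delta$ is below the decoding threshold.

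Concretely, I would fix a binary state vector $\vect{a} \in \{0,1\}^m$ realized by a stream $S_t$, and let $X$ denote the internal memory of $\mathcal{A}$ after processing $S_t$; pan-privacy forces the distribution of $X$ to be $\epsilon$-differentially private in $\vect{a}$. Let $A \in \{-1, 1\}^{m \times n}$ with $n = O(m)$ be the decoding matrix from Theorem~\ref{thm:lp-dec}. For each column $A_i$, construct a continuation stream $S_t^{(i)}$ whose state vector equals $A_i$ (legal in the turnstile model since $d_j \in \mathbb{Z}$), and in parallel across all $i \in [n]$ run $\mathcal{A}(X, S_t^{(i)})$ to obtain estimates $\tilde{b}_i$ of $\vect{a}\cdot A_i$ with $|\tilde{b}_i - \vect{a}\cdot A_i| = o(\sqrt{m})$ with probability at least $1-\delta$ each. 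Feed $\tilde{b} = (\tilde{b}_1, \ldots, \tilde{b}_n)$ into the $L_p$-decoder to obtain $\tilde{\vect{a}} \in \{0,1\}^m$, and then output a uniformly random coordinate of $\tilde{\vect{a}}$ as a guess for the corresponding coordinate of $\vect{a}$.

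The step that differs from Theorem~\ref{thm:dot-prod-lb}, and which is the main technical point, is the bound on the bad-query count: since $\delta$ is constant, a union bound is unavailable. Instead, let $F$ denote the fraction of indices $i$ with $|\tilde{b}_i - \vect{a}\cdot A_i| > \alpha$ for $\alpha = o(\sqrt{m})$. Then $\mathbb{E}[F] \leq \delta$, and Markov's inequality gives $\Pr[F > \rho] \leq \delta/\rho$. The hypothesis $\delta < \rho^*/(2(1+\eta))$ is calibrated exactly so that we may pick $\rho < \rho^*$ with $\delta/\rho < 1/(2(1+\eta))$; on the complementary constant-probability event, Theorem~\ref{thm:lp-dec} succeeds (up to probability $e^{-\Omega(m)}$) and returns $\tilde{\vect{a}}$ with $\|\vect{a} - \tilde{\vect{a}}\|_0 \leq \epsilon' \alpha^2 = o(m)$. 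Consequently, the random guess predicts the chosen coordinate of $\vect{a}$ correctly with probability bounded strictly above $1/2$ by a constant depending only on $\eta$, contradicting $\epsilon$-differential privacy of the whole pipeline (which is a post-processing of $X$ and hence itself $\epsilon$-differentially private) for any fixed constant $\epsilon$. Threading the three constants --- the Markov failure, the $e^{-\Omega(m)}$ decoding failure, and the $\epsilon$-differential privacy guarantee --- is the only subtle point; everything else is identical to the proof of Theorem~\ref{thm:dot-prod-lb}.
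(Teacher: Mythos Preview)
Your proposal is correct and follows the same template as the paper's proof: mimic the argument of Theorem~\ref{thm:dot-prod-lb}, replacing Dinur--Nissim decoding by the $L_p$-decoding of Theorem~\ref{thm:lp-dec}, and handle the constant per-query failure probability by bounding the \emph{fraction} of inaccurate queries rather than union-bounding.

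Two small differences from the paper are worth flagging. First, you feed $\{-1,1\}$ query vectors directly to the algorithm via turnstile continuation streams; the paper instead simulates each $\{-1,1\}$ query as the difference of two $\{0,1\}$ queries, incurring failure probability at most $2\delta$ per simulated query. This is what produces the factor of $2$ in the hypothesis $\delta < \rho^*/2(1+\eta)$, and it is what the paper needs later for the distinct-count corollary, whose reduction works only on binary instances. Your direct approach is fine for the theorem as stated (and in fact would tolerate a slightly larger $\delta$). Second, for the concentration step you use Markov's inequality on the inaccurate-query fraction, while the paper invokes a Chernoff bound by appealing to independence of the parallel continuations. Your Markov argument is the more robust of the two, since it does not rely on any independence across continuations sharing the common state $X$; it already yields a constant success probability bounded away from $1/2$, which is all that is needed to contradict $\epsilon$-differential privacy of the post-processed state.
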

\begin{proof}
  The proof is analogous to the proof of
  Theorem~\ref{thm:dot-prod-lb}. Note first that the $\{-1, 1\}$
  queries of Theorem~\ref{thm:lp-dec} can be simulated as the
  difference of two $\{0, 1\}$ queries, which gives $o(\sqrt{m})$
  additive error with probability at most $1-2\delta$. In order to
  apply Theorem~\ref{thm:lp-dec}, we need to guarantee that at most
  $\rho < \rho^*$ fraction of the queries answered by $\mathcal{Q}$
  have error $\Omega(\sqrt{m})$. Call such queries
  \emph{inaccurate}. In expectation there are at most $2\delta$
  inaccurate queries. Since the statement of Theorem~\ref{thm:lp-dec}
  holds when the queries are independent, an application of a Chernoff
  bound with a large enough number of queries shows that except with
  negligible probability there are at most $\rho^*$ inaccurate
  queries. After applying Theorem~\ref{thm:lp-dec} the proof can be
  finished analogously to the proof of Theorem~\ref{thm:dot-prod-lb}.
\end{proof}

\begin{cor}
  Let $\mathcal{A}$ be an online algorithm that on input 
  $S$ outputs $\Dt(S) \pm o(\sqrt{m})$ with probability at least $1 -
  \delta$. If $\delta < \rho^*/6(1+\eta)$, then $\mathcal{A}$ is not
  $\epsilon$-pan private for any constant $\epsilon$.
\end{cor}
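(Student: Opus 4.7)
The plan is to mirror the reduction used in the earlier corollary, which derived a distinct-count lower bound from the dot-product lower bound of Theorem~\ref{thm:dot-prod-lb}. Now I would apply exactly the same reduction but invoke the preceding theorem (the one based on Theorem~\ref{thm:lp-dec}) as the source, picking up its constant factor $\rho^*$ for the tolerable failure probability.

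Concretely, suppose for contradiction that there is an $\epsilon$-pan-private online algorithm $\mathcal{A}$ that outputs $\Dt(S) \pm o(\sqrt{m})$ with probability $1-\delta$, where $\delta < \rho^*/6(1+\eta)$. Given two binary streams $S'$ and $S''$ with state vectors $\vect{a}$ and $\vect{a}'$, form their concatenation $S = (S', S'')$. By inclusion-exclusion on binary vectors,
\begin{equation*}
  \vect{a}\cdot\vect{a}' \;=\; \Dt(S') + \Dt(S'') - \Dt(S).
\end{equation*}
Running $\mathcal{A}$ independently on $S'$, $S''$, and $S$, and combining the three estimates via the identity above, yields an online algorithm for binary dot product whose output has additive error $3\cdot o(\sqrt{m}) = o(\sqrt{m})$ with probability at least $1 - 3\delta$, by a union bound. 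Three independent runs of an $\epsilon$-pan-private procedure compose to a $3\epsilon$-pan-private procedure, so the resulting dot-product algorithm is $3\epsilon$-pan-private for the constant $3\epsilon$.

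By hypothesis $3\delta < \rho^*/2(1+\eta)$, which is exactly the threshold required by the preceding theorem (the dot-product lower bound obtained from the $L_p$-decoding result of Theorem~\ref{thm:lp-dec}). Applying that theorem to the reduced dot-product algorithm yields the contradiction, completing the proof.

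The only subtle point, and the place I expect any real care to be needed, is bookkeeping the probability and accuracy factors in the reduction: making sure that the $o(\sqrt{m})$ guarantee survives the threefold composition (it does, since a constant multiple of $o(\sqrt{m})$ is still $o(\sqrt{m})$), and that the factor of $3$ lost in the union bound is absorbed into the $\rho^*/6(1+\eta)$ hypothesis exactly so that the failure probability of the reduced algorithm falls below the $\rho^*/2(1+\eta)$ threshold of the preceding theorem. Everything else is identical in spirit to the earlier corollary's reduction.
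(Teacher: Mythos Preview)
Your proposal is correct and follows exactly the approach the paper intends: the corollary is obtained by the same inclusion--exclusion reduction from distinct counts to binary dot product used in the earlier corollary, now invoking the preceding $L_p$-decoding-based theorem, with the factor of $3$ from the union bound accounting precisely for the $\rho^*/6(1+\eta)$ versus $\rho^*/2(1+\eta)$ discrepancy. The paper does not spell out this proof, but your bookkeeping of the $3\epsilon$, $3\alpha = o(\sqrt{m})$, and $3\delta$ factors matches the earlier corollary's reduction verbatim.
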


This corollary implies the optimality of the full-space distinct
counts estimation algorithm presented in Section~\ref{HH} \emph{when
  only additive approximations are allowed}. 

Using similar arguments, we can show the following (proof omitted).
\begin{theorem}
  Let $\mathcal{A}$ be a streaming algorithm that on input a stream
  $S_t$ and any constant $\alpha$ outputs $(1 \pm \alpha) \Dt \pm
  o(\log^c m)$ with probability at least $1 - n^{-\Omega(\log^c
    m)}$. Then $\mathcal{A}$ is not $\epsilon$-pan private for any
  constant $\epsilon$.
\end{theorem}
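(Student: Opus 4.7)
The plan is to imitate the template of Theorem~\ref{thm:dot-prod-lb}, but substitute the union-query decoder of Theorem~\ref{thm:union-dec} in place of the dot-product decoder of Theorem~\ref{thm:din-nis}. The reason this substitution is the right one is that the assumed accuracy guarantee $(1\pm\alpha)\Dt \pm o(\log^c m)$ matches exactly the $(1\pm\alpha_1)\norm{\vx+\vq_i}_0 \pm \alpha_2$ error form required by Theorem~\ref{thm:union-dec}, with $\alpha_1 = \alpha$ and $\alpha_2 = o(\log^c m)$, once we restrict to binary state vectors and reinterpret $\Dt$ of a concatenation as $\norm{\vect{a}+\vq_i}_0$ in the cash-register model.

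First I would restrict attention to binary, sparse inputs. Fix a stream $S_t$ whose state vector $\vect{a}$ lies in $\{0,1\}^m$ and satisfies $\norm{\vect{a}}_0 \leq C\log^c m$, where $C,c$ are the constants of Theorem~\ref{thm:union-dec}. Let $X$ be the internal state of $\mathcal{A}$ immediately after processing $S_t$; by pan-privacy, $X$ is $\epsilon$-differentially private with respect to $S_t$. For each of the $n^{O(\log^c n)}$ query vectors $\vq_i$ produced by Theorem~\ref{thm:union-dec}, I would build a continuation stream $S^{(i)}$ whose state vector is $\vq_i$ and run $\mathcal{A}$ on $S^{(i)}$ starting from the saved state $X$. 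The resulting output is an approximation to the distinct count of the concatenation, which is exactly $\norm{\vect{a}+\vq_i}_0$.

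Next I would verify that all $n^{O(\log^c n)}$ answers are simultaneously accurate. Since each query fails with probability $n^{-\Omega(\log^c m)}$, taking the constant inside the $\Omega(\cdot)$ large enough relative to the $O(\cdot)$ on the number of queries and applying a union bound yields total failure probability $o(1)$. On this event, the hypotheses of Theorem~\ref{thm:union-dec} hold with $\alpha_1=\alpha$ and $\alpha_2 = o(\log^c m)$, so the decoder returns $\tilde{\vect{a}}$ with
\[
\norm{\vect{a} - \tilde{\vect{a}}}_0 \;\leq\; \frac{16(\alpha+\epsilon_0)}{1-\alpha}\,C\log^c m,
\]
where $\epsilon_0$ is the parameter supplied to Theorem~\ref{thm:union-dec}. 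Choosing both $\alpha$ and $\epsilon_0$ to be sufficiently small constants pushes this Hamming error below, say, $\frac{1}{4}C\log^c m$.

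Finally I would close with the pan-privacy contradiction. The decoded string $\tilde{\vect{a}}$ is produced by a (randomized) post-processing of $X$, hence by the post-processing property it is $\epsilon$-differentially private with respect to $S_t$. But for a user-level neighboring pair $(S_t, S'_t)$ whose binary sparse state vectors $\vect{a}, \vect{a}'$ differ by swapping a single ID (so they disagree in exactly two coordinates $i$ and $j$), the decoder when run on each side places $\tilde{\vect{a}}$ within Hamming distance $\tfrac14 C\log^c m$ of $\vect{a}$ and $\vect{a}'$ respectively; examining coordinate $i$ of $\tilde{\vect{a}}$ then distinguishes $S_t$ from $S'_t$ with probability $1-o(1)$, incompatible with $\epsilon$-differential privacy for any constant $\epsilon$. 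The main obstacle I anticipate is this last step: unlike the dense regime of Theorem~\ref{thm:dot-prod-lb}, the error budget here is only a fraction of $\log^c m$ rather than of $m$, so one must argue carefully that the neighboring pair $(\vect{a},\vect{a}')$ can be chosen to lie in the sparse regime where Theorem~\ref{thm:union-dec} applies to \emph{both}, and that reading off a single disagreeing coordinate yields a distinguishing advantage that exceeds what an $e^{\epsilon}$ multiplicative factor can absorb.
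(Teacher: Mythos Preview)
Your proposal is correct and follows exactly the approach the paper indicates: the paper omits the proof but states explicitly that it ``follows by arguments identical to the one used to prove Theorem~\ref{thm:dot-prod-lb}, but using \ldots\ Theorem~\ref{thm:union-dec} in place of Theorem~\ref{thm:din-nis},'' which is precisely the substitution you carry out. You have in fact filled in more detail than the paper provides---in particular, your observations that the concatenated stream has distinct count $\norm{\vect{a}+\vq_i}_0$, that the $n^{-\Omega(\log^c m)}$ failure probability survives a union bound over the $n^{O(\log^c n)}$ queries, and that the neighboring pair must be chosen inside the sparse regime, are all the right checks and none presents a real obstacle.
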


The theorem establishes that when an arbitrarily small multiplicative
approximation factor is allowed, an additive polylogarithmic error is
unavoidable for the problem of estimating distinct counts. Thus, up to
the exact order of the polylogarithmic additive factor, our sketching
algorithm for distinct count estimation is optimal. 

\section{Heavy Hitters}
\label{HH}

We provide further evidence for the usefulness of sketching for
pan-private algorithms by presenting an improved algorithm for the
Heavy Hitters problem. As a tool we use a variant of the cropped mean
estimator from~\cite{dwork-pan}, but we combine it with a sketching
approach in the style of CM sketches~\cite{cormode2005improved},
instead of the sampling approach used in \cite{dwork-pan}. This will
allow us to significantly reduce the approximation error by reducing
the universe size while approximately preserving the number of heavy
items. Once again, we observe that polylogarithmic space complexity is
a by-product of the improved approximation ratio. 

\subsection{Full Space Cropped Sum}
We begin with an analysis of the cropped sum estimator in full
space. For completeness we describe the estimator. We will approximate
$T_k^{(t)}(\tau)$ for a universe $\Uni$ and a sequence of updates $S_t.$


Let $\mathcal{D}_0$ be the uniform distribution over $\{0, 1\}$ and
$\mathcal{D}_1$ be the distribution that assigns probability $1/2 +
\epsilon/4$ to 1 and the remaining probability to 0. We compute an
estimate $\tilde{T}_k(\tau)$ of $T_k(\tau)$ as follows:
\begin{itemize}
\item For each $j \in \Uni$, initialize a counter $c_j \in_R \{0,
  \ldots \tau-1\}$, a bit $b_j \sim \mathcal{D}_0$
\item When item $j$ arrives on the stream, increment the counter
  $c_{j} \pmod{\tau}$. If $c_{j} = 0$ pick $b_{j}$ from
  $\mathcal{D}_1$.
\item At query time, compute $o := |\{j: b_j = 1\}|$, and output
  $\tilde{T}(\tau) = (o - |\Uni|/2)\frac{4\tau}{\epsilon}$.
\end{itemize}
Note that this algorithm is simply an instantiation of the cropped
mean estimator from \cite{dwork-pan} in full space. Keeping counters
for each element  allows us to guarantee smaller
additive error in terms of $m$.

\begin{lemma}\label{lm:csum-full}
  The estimator $\tilde{T}_k(\tau)$ is $\epsilon$-differentially
  private. Moreover, with probability $1 - 2e^{-2 \alpha}$,
  $$
  |T_k(\tau) - \tilde{T}_k(\tau)| \leq \frac{4\alpha
    t\sqrt{|\Uni|}}{\epsilon} .
  $$
\end{lemma}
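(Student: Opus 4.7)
The plan is to argue privacy and accuracy separately. For privacy, fix neighboring streams $S, S'$ with state vectors $\vect{a}, \vect{a}'$; by the definition of neighbors these agree on all coordinates except at two indices $i$ and $i'$ whose update counts differ. The internal state consists of the counters $(c_j)$ and the bits $(b_j)$, and the output depends only on the bits. For coordinates $j \notin \{i, i'\}$, the joint distribution of $(c_j, b_j)$ is identical under $S$ and $S'$, so it suffices to bound the privacy cost at the two changing coordinates.

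For a single coordinate $j$ with update count $a_j$, I would observe that since $c_j$ is initialized uniformly on $\{0,\dots,\tau-1\}$ and advanced modulo $\tau$, its final value is uniformly distributed on $\{0, \dots, \tau-1\}$; moreover, conditional on the final value $v$, a reset occurred along the way precisely when $v \in \{0, \dots, \min(a_j, \tau) - 1\}$. Hence $b_j$ is drawn from $\mathcal{D}_1$ in this ``reset region'' and from $\mathcal{D}_0$ otherwise, and the full joint distribution of $(c_j, b_j)$ is thereby explicit. Replacing $a_j$ by $a'_j$ only shifts the boundary between the two regions; on the thin boundary strip, the ratio of the $(c_j, b_j)$ probabilities between $S$ and $S'$ is at most $(1/2 + \epsilon/4)/(1/2) = 1 + \epsilon/2 \le e^{\epsilon/2}$, while elsewhere the probabilities are equal. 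Composition across the two affected coordinates then yields $\epsilon$-pan-privacy.

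For accuracy, the same reset-probability computation gives $\Pr[b_j = 1] = 1/2 + (\epsilon/(4\tau)) \cdot \min(a_j, \tau)$, whence
\begin{equation*}
\expect[o] = \frac{|\Uni|}{2} + \frac{\epsilon}{4\tau}\, T_1(\tau),
\end{equation*}
so $\tilde{T}(\tau) = (o - |\Uni|/2)\cdot (4\tau/\epsilon)$ is unbiased. Since $o$ is a sum of $|\Uni|$ independent Bernoullis, Hoeffding's inequality gives
\begin{equation*}
\Pr\bigl[|o - \expect[o]| \ge s\bigr] \le 2\exp(-2s^2/|\Uni|),
\end{equation*}
and rescaling by the factor $4\tau/\epsilon$ with an appropriate choice of $s$ (on the order of $\alpha\sqrt{|\Uni|}$) yields the claimed tail bound, up to the apparent typo $t \leftrightarrow \tau$ and the shape of the exponent.

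The main obstacle is the pan-privacy analysis rather than the accuracy argument: one must reason about the joint distribution of counters \emph{and} bits, not just the bit marginals, because the final counter $c_j$ reveals which region (reset or no-reset) the initial counter fell into. What saves the analysis is that the reset-region boundary moves only by $|a_j - a'_j|$ coordinates under a neighbor substitution, and on that boundary strip the mixing distributions $\mathcal{D}_0$ and $\mathcal{D}_1$ differ multiplicatively by at most $1+\epsilon/2$, which keeps each single-coordinate contribution at $\epsilon/2$-DP and the two-coordinate composition at $\epsilon$-DP overall.
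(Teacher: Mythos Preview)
Your proposal is correct and follows the same approach as the paper, which simply defers the privacy analysis to \cite{dwork-pan} and applies Hoeffding for accuracy; you have helpfully filled in the details the paper omits. Two small clarifications are worth making. First, in your ratio computation you should also check the $b_j=0$ case, where the worst multiplicative ratio is $(1/2)/(1/2-\epsilon/4)=1/(1-\epsilon/2)$, which is slightly above $e^{\epsilon/2}$; this is the usual slack one absorbs into constants, but it is the binding case. Second, and more importantly, your closing remark that ``the reset-region boundary moves only by $|a_j-a'_j|$ coordinates'' is not what saves the argument, and under user-level neighbors $|a_j-a'_j|$ can be arbitrarily large. What actually makes the analysis go through is exactly what you wrote in the body: for \emph{every} final value $v$, the conditional law of $b_j$ is either $\mathcal{D}_0$ or $\mathcal{D}_1$, and these two are within a $1\pm\epsilon/2$ multiplicative factor of each other pointwise. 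The width of the boundary strip plays no role, so the per-coordinate privacy bound holds uniformly in $a_j,a'_j$, which is precisely what pan-privacy at an arbitrary intrusion time requires.
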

\begin{proof}
  The privacy analysis is identical to the privacy analysis of the
  cropped mean estimator in \cite{dwork-pan}. 
  By the analysis in \cite{dwork-pan}, $\expect[o] = |\Uni|/2 +
  \epsilon T_k(\tau)/4\tau$, and, therefore,
  $\expect[\tilde{T}_k(\tau)] = T_k(\tau)$. By Hoeffding's
  bound,
  $\prob[|o - \expect[o]| \geq \alpha \sqrt{|\Uni|}] \leq 2e^{-2\alpha}.$
  The lemma follows.
\end{proof}

Note that setting the cropping parameter $t$ to 1 gives an estimate
of the distinct count $D$ in full space with $O(\sqrt{m})$ additive
error.

\subsection{$\HH$ Algorithm}
The limiting factor in the cropped sum estimator is $m$. Even though we allow full space to
the algorithm and it achieves pan-privacy, the approximation
guarantees involve an additive factor in $m$
which is large.  The key step in our algorithm is to
project the input $S$ onto $S'$ over a much smaller
universe, so that $S'$ has approximately the same  $k$-heavy
hitters count. In fact, we are able to reduce the universe size to a
constant that depends only on $k$ and the desired approximation
guarantee. The reduced universe size directly implies a more accurate
cropped sum estimate and, hence, a more accurate estimate of the
number of $k$-heavy hitters. Next we present our algorithm.

Assume the value $F_1 = F_1^{(t_0)}$, where $t_0$ is the time step when the
algorithm will be queried, is known ahead of time. Assume also we have
oracle access to a random function $f:[m] \rightarrow [h]$ (these
assumptions will be removed in Section~\ref{ext}. Given a
sequence of updates $S$, let $f(S)$ be the sequence $(f(i_1), d_1),
\ldots, (f(i_t), d_t)$, and let $T_k(\tau|f)$ and
$\tilde{T}_k(\tau|f)$ be, respectively, $T_k(\tau)$ and
$\tilde{T}_k(\tau)$ computed on the stream $f(S)$. Note that $f(S)$ is
a stream over the universe $[h]$ and can easily be simulated online
given the oracle for $f$.

\begin{itemize}
\item Choose a random function $f:\Uni \rightarrow [h]$.  Compute $x_1
  = \tilde{T}_k(F_1/k|f)$ and $x_2 = \tilde{T}_k(F_1/ck|f)$. Output
  $$
  \tilde{\HH}(k) := (x_1 - x_2)\left(\frac{F_1}{k} - \frac{F_1}{ck}\right)^{-1}
  $$
\end{itemize}

The above algorithm will be accurate provided that the function $f$
approximately preserves the number of heavy hitters. In the next
section we show that a random $f$ satisfies this condition with high
probability.

\subsection{Reducing the Universe Size}

Remember that we denote $\ai = \sum_{j: i_j = i}{d_j}$. 
\begin{lemma}
  \label{lm:k-apx-sep}
  Let $f: \Uni \rightarrow [h]$ be a random function. Also, let
  $\tilde{k} = |\{j: \exists i \in h^{-1}(j)\text{ s.t. }\ai
  \geq t/k\}|$.  With probability $1-\delta$,
  $$
  \frac{\tilde{k}}{HH(k)} \geq 1 - \frac{k}{\delta h}.
  $$
\end{lemma}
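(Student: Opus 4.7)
The plan is to reduce the question to a balls-into-bins calculation on just the heavy hitters. Let $H = \{i \in \Uni : |\ai| \geq F_1/k\}$ denote the set of $k$-heavy hitters, so $|H| = \HH(k)$; note that $\HH(k) \leq k$ since each heavy hitter carries at least a $1/k$ fraction of $F_1$. The first observation is that $\tilde{k} = |f(H)|$, the number of distinct buckets occupied by at least one heavy hitter, so our target inequality rearranges to $|H| - |f(H)| \leq \HH(k) \cdot k/(\delta h)$.

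Next I would bound the gap $|H| - |f(H)|$ combinatorially by the number of colliding pairs of heavy hitters. For each $j \in [h]$, let $r_j = |H \cap f^{-1}(j)|$; then
\begin{equation*}
|H| - |f(H)| = \sum_{j \in [h]} (r_j - \mathbf{1}[r_j \geq 1]) = \sum_{j: r_j \geq 1} (r_j - 1) \leq \sum_{j \in [h]} \binom{r_j}{2},
\end{equation*}
because $r - 1 \leq \binom{r}{2}$ whenever $r \geq 1$. The right-hand side is exactly $X := |\{\{i,i'\} \subseteq H : f(i) = f(i')\}|$, the number of colliding pairs among the heavy hitters.

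Now I would compute $\mathbb{E}[X]$ using that $f$ is uniformly random: each pair collides with probability $1/h$, so
\begin{equation*}
\mathbb{E}[X] = \binom{\HH(k)}{2} \cdot \frac{1}{h} \leq \frac{\HH(k)^2}{2h} \leq \frac{\HH(k) \cdot k}{2h},
\end{equation*}
where the last step uses $\HH(k) \leq k$. Markov's inequality then gives
\begin{equation*}
\Pr\!\left[X \geq \frac{\HH(k)\cdot k}{\delta h}\right] \leq \frac{\delta}{2} \leq \delta.
\end{equation*}
On the complementary event, $|H| - |f(H)| \leq \HH(k)\cdot k/(\delta h)$, which is exactly the claimed inequality $\tilde{k}/\HH(k) \geq 1 - k/(\delta h)$.

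There is no real obstacle here; the only subtle step is the combinatorial identity $|H| - |f(H)| \leq \sum_j \binom{r_j}{2}$, which converts a ``count of colliding items'' into a ``count of colliding pairs'' and lets us work with pairwise independence of $f$-values. The use of $\HH(k) \leq k$ is what turns the natural $\HH(k)^2/h$ bound from the birthday calculation into the asymmetric form $\HH(k) \cdot k/h$ appearing in the statement.
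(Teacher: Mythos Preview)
Your proof is correct, and it takes a genuinely different route from the paper's argument. The paper works directly with the indicator $I_j$ that bucket $j$ contains no heavy hitter, computes $\expect[I_j] = (1-1/h)^{\HH(k)}$ using the full independence of the random function $f$, sums to obtain $\expect[\tilde{k}/\HH(k)] \geq 1 - \HH(k)/h$, and then applies Markov's inequality to the nonnegative deficit $(\HH(k)-\tilde{k})/\HH(k)$. You instead upper bound the deficit by the number of colliding pairs among heavy hitters and compute the expected number of such pairs.

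The two approaches land on the same Markov step, but yours is both more elementary and more robust: the pair-collision expectation $\binom{\HH(k)}{2}/h$ only requires pairwise independence of the $f$-values, whereas the paper's computation of $(1-1/h)^{\HH(k)}$ genuinely needs full independence. Since later in the section (Lemma~\ref{lm:ck2k}) the paper wants $f$ to be merely a pairwise-independent hash, your argument is actually better aligned with that derandomization goal. Conversely, the paper's calculation is slightly tighter in expectation (it tracks $\expect[\tilde{k}]$ exactly rather than through the collision-pair overcount $r_j - 1 \leq \binom{r_j}{2}$), though this extra precision is not used in the final bound after invoking $\HH(k)\leq k$.
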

\begin{proof}
  Let the indicator random variable $I_j$ be equal to 1 iff\; $\forall i
  \in h^{-1}(j):\ai < t/k$. The expected value of
  $I_j$ for any $j$ is as follows:
  $$
  \expect[I_j] = \left(1 - \frac{1}{h}\right)^{HH(k)} \leq
  \exp(-HH(k)/h).
  $$
  Denote, for convenience, $r := h/HH(k)$. We can write $\tilde{k}$ in
  terms of $I_j$:
  \begin{align*}
    \expect[\tilde{k}] &= \sum_{j \in [h]}{(1 - I_j)} \geq h(1 -
    e^{-1/r})\\
    \expect\left[\frac{\tilde{k}}{HH(k)}\right] &\geq r(1 -
      e^{-1/r}).
  \end{align*}
  Using the inequality $e^x \geq 1 + x + x^2$ (valid for $x \in
  [-1, 1]$), we simplify to
  $$
  \expect\left[\frac{\tilde{k}}{HH(k)}\right] \geq r(1 - 1 + \frac{1}{r} -
  \frac{1}{r^2}) = 1 - \frac{1}{r}
  $$
  We can apply Markov's inequality to the random variable $(HH(k) -
  \tilde{k})/HH(k) > 0$. Therefore, with probability $1 - \delta$,
  $$
  \frac{\tilde{k}}{HH(k)} \geq   1 - \frac{1}{\delta r} \geq
  1- \frac{k}{\delta h}.
  $$
\end{proof}


In the next lemma we show that we can project the universe onto a
significantly smaller universe without creating ``new'' heavy
hitters. 

\begin{lemma}
  \label{lm:ck2k}
  Let $A \subseteq \Uni$ be set of items s.t. $\forall i \in A: \ai
  \leq F_1\delta/2k^2$. Also, let $f:\Uni \rightarrow [h]$ be a
  pairwise-independent hash function. There exists an $h_0 =
  \Theta(k)$, s.t.~for any $h \geq h_0$ with probability at least
  $1-\delta$
  $$
  \forall j \in [h]: \sum_{i \in A \cap f^{-1}(j)}{\ai} \leq F_1/k.
  $$
\end{lemma}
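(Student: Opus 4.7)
My plan is to bound, for a single bucket $j \in [h]$, the probability that $S_j := \sum_{i \in A \cap f^{-1}(j)} \ai$ exceeds $F_1/k$, and then union bound over the $h$ buckets. The two ingredients are (i) that the total mass $\sum_{i \in A} \ai$ is at most $F_1$ and (ii) that each individual contribution is small, namely $\ai \leq F_1\delta/(2k^2)$, which lets Chebyshev control the variance.

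First I would compute the mean and variance of $S_j$ using the pairwise independence of $f$. Writing $S_j = \sum_{i \in A} \ai \cdot \mathbf{1}[f(i) = j]$, linearity gives $\mathbb{E}[S_j] = \frac{1}{h}\sum_{i \in A} \ai \leq F_1/h$, and pairwise independence gives
\begin{equation*}
\Var[S_j] \;=\; \sum_{i \in A} \ai^2 \cdot \tfrac{1}{h}\bigl(1-\tfrac{1}{h}\bigr) \;\leq\; \tfrac{1}{h}\sum_{i \in A} \ai^2.
\end{equation*}
Using the hypothesis $\ai \leq F_1\delta/(2k^2)$ on $A$ and the trivial bound $\sum_{i \in A} \ai \leq F_1$, I bound $\sum_{i \in A} \ai^2 \leq \max_{i \in A} \ai \cdot \sum_{i \in A}\ai \leq F_1^2\delta/(2k^2)$, so $\Var[S_j] \leq F_1^2\delta/(2k^2 h)$.

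Next I pick $h_0$ a small constant multiple of $k$, say $h = 4k$, which makes $\mathbb{E}[S_j] \leq F_1/(4k)$, so that $S_j > F_1/k$ forces a deviation of at least $3F_1/(4k)$ from the mean. Chebyshev then yields
\begin{equation*}
\Pr\bigl[S_j > F_1/k\bigr] \;\leq\; \frac{\Var[S_j]}{(3F_1/(4k))^2} \;\leq\; \frac{F_1^2\delta/(2k^2 h)}{9F_1^2/(16k^2)} \;=\; \frac{8\delta}{9h}.
\end{equation*}
A union bound over the $h$ buckets gives total failure probability at most $8\delta/9 < \delta$, as desired. For general $h \geq h_0 = 4k$ the same computation goes through since both $\mathbb{E}[S_j]$ and $\Var[S_j]$ are monotonically decreasing in $h$.

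The main subtlety, if any, is making sure the constants line up: the role of the $\delta$ inside the hypothesis $\ai \leq F_1\delta/(2k^2)$ is precisely to cancel the $h$ in the union bound, so one has to be a little careful balancing the slack in $\mathbb{E}[S_j]$ against the variance term. Beyond that, the proof is a clean second-moment/Chebyshev argument; I do not anticipate any deep obstacle, and the hypothesis that $f$ is only pairwise-independent (rather than fully random) is exactly what is needed to make the variance computation go through.
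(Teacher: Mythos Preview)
Your proposal is correct and follows essentially the same route as the paper: compute $\mathbb{E}[S_j]$ and $\Var[S_j]$ via pairwise independence, bound $\sum_{i\in A} a_i^2 \le (\max_i a_i)\sum_i a_i \le F_1^2\delta/(2k^2)$, apply Chebyshev to a single bucket, and union bound. The only cosmetic differences are that the paper uses the one-sided (Cantelli) form of Chebyshev and sets $h_0=(\sqrt{2}+2)k$ to squeeze the per-bucket failure probability down to exactly $\delta/h$, whereas you use the two-sided inequality with $h_0=4k$ and land at $8\delta/(9h)$; both give the claimed $1-\delta$ bound with $h_0=\Theta(k)$.
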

\begin{proof}
  Let $N_j = \sum_{i \in A \cap f^{-1}(j)}{\ai}$, i.e.~$N_j$ is the total
  frequency of the items mapped to $j$ by $f$. It's easy to see that
  $\expect[N_j] = \frac{1}{h}\sum_{i \in A}{\ai} \leq F_1/h$. Let's
  analyze the variance. Let $X_{ij}$ be the indicator variable for the
  event $\{i \in B_j\}$. By pairwise independence, $\Var(N_j) = 
  \sum_{i\in A}{\Var(X_{ij}\ai)}$.
  \begin{align*}
  \Var(X_{ij}\ai) &= \expect[(X_{ij}\ai)^2] - \expect[(X_{ij}\ai)]^2 \\
  &= \ai^2\left(\frac{1}{h} - \frac{1}{h^2}\right).
  \end{align*}
  Therefore, $\Var(N_j) = \left(\frac{1}{h} -
    \frac{1}{h^2}\right)\sum_{i \in A}{\ai^2}$. We will denote
  $\sum_{i \in A}{\ai^2}$ as $F_2(A)$. 

  \begin{fact} \label{fct:f2-bound} If $\sum_{i \in A}{\ai} \leq F_1$
    and $\forall i \in A: \ai \leq pF_1$ for some $p \in [0, 1]$,
    $F_2(A) = \sum_{i \in A}{\ai^2} \leq p(F_1)^2$.
  \end{fact}
  \begin{proof}
    Let $\vect{a}$ be a vector that maximizes $F_2(A)$. We may assume
    without loss of generality that $\sum_{i \in A}{\ai} = F_1$. Then
    either 0 or at least two coordinates in $\vect{a}$ can be in the
    open interval $(0, pF_1)$. We claim that there exists a maximum
    $\vect{a}$ s.t.~all coordinates are equal to either 0 or
    $pF_1$. Assume, for contradiction, that there exist $i$ and $i'$
    in $A$ s.t.~$0 < \ai < pF_1$ and $0 < \vect{a}_{i'} < pF_1$. Let
    $\ai \geq \st{i'}$. Then changing $\ai$ to $\ai + 1$ and
    $\st{i'}$ to $\st{i'} - 1$ strictly increases $F_2(A)$
    which is a contradiction.
  \end{proof}

  By Fact~\ref{fct:f2-bound}, $F_2(A) \leq (F_1)^2\delta/2k^2$. Set $h
  \geq h_0 = (\sqrt{2} + 2)k$. By the one-sided Chebyshev inequality,
  \begin{align*}
    \prob[N_j \geq \frac{F_1}{k}] &\leq \frac{1}{1 + \left(\frac{F_1}{k} -
        \frac{F_1}{h}\right)^2/\left(\frac{1}{h} -
        \frac{1}{h^2}\right)F_2(A)}\\
    &< \frac{F_2(A)}{h\left(\frac{F_1}{k} -
        \frac{F_1}{h}\right)^2}
    \leq \frac{\delta}{2hk^2\left(\frac{1}{k} -
        \frac{1}{h}\right)^2}
    = \frac{\delta}{2h\left(1 - \frac{1}{\sqrt{2} + 2}\right)^2} =
    \frac{\delta}{h}. 
  \end{align*}
  The lemma follows by a union bound.
\end{proof}

We are now ready to analyze $\tilde{\HH}(k)$. The following theorem
shows that $\tilde{\HH}(k)$ is in the range $[(1-\beta)\HH(k) -
O(\sqrt{k}), \HH(O(k^2)) + O(\sqrt{k})]$ with constant probability.
\begin{theorem}
  $\tilde{\HH}(k)$ can be computed while satisfying $2\epsilon$-pan
  privacy. Moreover, if $h \geq \max\{k/\beta\delta,
  (\sqrt{2}+2)ck\}$, then with probability $1-2\delta -
  4\exp(-\alpha)$
  $$
  (1 - \beta)\HH(k) - \frac{4(c+1)\alpha\sqrt{h}}{(c-1)\epsilon} \leq
  \tilde{\HH}(k) \leq \HH(2c^2k^2/\delta) +
  \frac{4(c+1)\alpha\sqrt{h}}{(c-1)\epsilon}.
  $$
\end{theorem}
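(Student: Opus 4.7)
The plan is to argue privacy from composition and accuracy by sandwiching the computed ratio between the true cropped-sum difference (which controls heavy hitters in the reduced universe) and then pushing those reduced-universe bounds back to the original stream via Lemmas~\ref{lm:k-apx-sep} and~\ref{lm:ck2k}. Each of $x_1$ and $x_2$ is a pan-private cropped-sum estimate by Lemma~\ref{lm:csum-full} and hence is $\epsilon$-pan private. Since $\tilde{\HH}(k)$ is a deterministic post-processing of $(x_1,x_2)$, the composition theorem (Theorem~\ref{composition}) yields $2\epsilon$-pan privacy.

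For accuracy, first I would work in the reduced universe $[h]$. Let $\vect{b}$ be the state vector of $f(S)$ and write $\HH'(\cdot)$, $T_1(\cdot|f)$ for the heavy-hitters counts and cropped sums on $f(S)$; note that $F_1(f(S))=F_1(S)=F_1$. By splitting the contribution to $T_1(F_1/k|f)-T_1(F_1/(ck)|f)$ according to whether $b_j$ falls in $(0,F_1/(ck)]$, $(F_1/(ck),F_1/k]$, or $(F_1/k,\infty)$, a direct calculation gives
\begin{equation*}
\HH'(k)\cdot\Bigl(\tfrac{F_1}{k}-\tfrac{F_1}{ck}\Bigr)\ \leq\ T_1(F_1/k|f)-T_1(F_1/(ck)|f)\ \leq\ \HH'(ck)\cdot\Bigl(\tfrac{F_1}{k}-\tfrac{F_1}{ck}\Bigr).
\end{equation*}

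Next I would bound $\HH'(k)$ from below and $\HH'(ck)$ from above in terms of the original stream. Any bucket $j$ containing an item $i$ with $a_i\geq F_1/k$ satisfies $b_j\geq F_1/k$, so $\HH'(k)\geq\tilde{k}$. Applying Lemma~\ref{lm:k-apx-sep} with the hypothesis $h\geq k/(\beta\delta)$ gives $\tilde{k}\geq(1-\beta)\HH(k)$ with probability $1-\delta$. For the upper bound, apply Lemma~\ref{lm:ck2k} with its ``$k$'' replaced by $ck$ (so the set $A$ consists of items with $a_i\leq F_1\delta/(2c^2k^2)$); the assumption $h\geq(\sqrt{2}+2)ck$ then forces, with probability $1-\delta$, every heavy bucket of $f(S)$ at threshold $F_1/(ck)$ to contain at least one item outside $A$, so $\HH'(ck)\leq |\Uni\setminus A|\leq\HH(2c^2k^2/\delta)$.

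It remains to translate the noisy estimates $x_1,x_2$ back to the true cropped sums. By Lemma~\ref{lm:csum-full} applied on the universe $[h]$ with cropping parameters $F_1/k$ and $F_1/(ck)$, with probability $1-4e^{-2\alpha}$ we get $|x_1-T_1(F_1/k|f)|\leq 4\alpha(F_1/k)\sqrt{h}/\epsilon$ and $|x_2-T_1(F_1/(ck)|f)|\leq 4\alpha(F_1/(ck))\sqrt{h}/\epsilon$ simultaneously. Adding the errors and dividing by $F_1/k-F_1/(ck)=F_1(c-1)/(ck)$ produces an additive slack of $4\alpha(c+1)\sqrt{h}/((c-1)\epsilon)$ on $\tilde{\HH}(k)$ relative to the reduced-universe ratio. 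Combining with the two sandwich inequalities and taking a union bound over the two failure events of probability $\delta$ each gives the claimed bound with overall probability at least $1-2\delta-4e^{-2\alpha}\geq 1-2\delta-4\exp(-\alpha)$. The main subtlety—and where I would be most careful—is choosing the right value of ``$k$'' in the Lemma~\ref{lm:ck2k} invocation so that the upper bound lands on $\HH(2c^2k^2/\delta)$ rather than a weaker count, and verifying that the two hash-based failure events can be absorbed simultaneously under the two hypotheses on $h$.
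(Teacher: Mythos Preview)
Your proposal is correct and follows essentially the same approach as the paper: privacy by composition of two cropped-sum estimators, a sandwich of the true cropped-sum difference between $\HH'(k)$ and $\HH'(ck)$ in the reduced universe, Lemma~\ref{lm:k-apx-sep} for the lower relation to $\HH(k)$, Lemma~\ref{lm:ck2k} (with its parameter $k$ replaced by $ck$) for the upper relation to $\HH(2c^2k^2/\delta)$, and Lemma~\ref{lm:csum-full} plus a union bound to control the additive error. Your explicit identification of the substitution $k\to ck$ in Lemma~\ref{lm:ck2k}, the bound $\HH'(ck)\leq|\Uni\setminus A|$, and the observation that $4e^{-2\alpha}\leq 4e^{-\alpha}$ are exactly the small bookkeeping steps the paper leaves implicit.
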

\begin{proof}
  The privacy guarantee follows by the $\epsilon$-pan privacy of the
  cropped sum estimators and the composition theorem of Dwork et
  al.~\cite{DMNS}. Next we analyze utility.
  
  Computing cropped $F_1$ at two levels of the cropping parameter gives
  us an approximation of the number of heavy hitters:
  \begin{align*}
    T_1(F_1/k) - T_1(F_1/ck) &= \sum_{j: N_j
      \geq F_1/ck}{\min(N_j, F_1/k) - F_1/ck}\\
    &= \sum_{j: N_j \geq F_1/k}{(F_1/k - F_1/ck)} + \sum_{j: F_1/ck \geq N_j
      \geq F_1/k}{(N_j - F_1/ck)}
  \end{align*}
  It immediately follows that $|\{j: N_j \geq F_1/k\}| < \expect[\tilde{\HH}(k)] \leq
  |\{j: N_j \geq F_1/ck\}|$. By Lemma~\ref{lm:k-apx-sep}, $|\{j: N_j
  \geq F_1/k\}| \geq (1 - \beta)\HH(k)$ except with probability
  $\delta$. We can apply Lemma~\ref{lm:ck2k} with $A = \{i: \ai \leq
  F_1\delta/2c^2k^2\}$. By the lemma, for every $j \in [h]$ we have $N_j
  \geq F_1/ck \Rightarrow \exists i \in f^{-1}(j)\text{ s.t.~} \ai \geq
  F_1\delta/(2c^2k^2)$, except with probability $\delta$. Therefore,
  $|\{j: N_j \geq F_1/ck\}| \leq \HH(c^2k^2/\delta)$. We have thus shown
  that
  $$    (1 - \beta)\HH(k) \leq 
    ( T_1(F_1/k|f) - T_1(
    F_1/ck|f) )\left(\frac{F_1}{k} - \frac{F_1}{ck}\right)^{-1}
    \leq    \HH(c^2k^2/\delta).$$
  With probability $1 - 4e^{-2\alpha}$, Lemma~\ref{lm:csum-full}
  gives us the following guarantees:
\begin{align*}
    \prob[|\tilde{T}_1(F_1/k|f) - T_1(F_1/k|f)| >
    \frac{4\alpha\sqrt{h}t}{k\epsilon}] &\leq 2\exp(-\alpha)\\
    \prob[|\tilde{T}_1(F_1/ck|f) - T_1(F_1/ck)| >
    \frac{4\alpha \sqrt{h}t}{ck\epsilon}] &\leq 2\exp(-\alpha)
\end{align*}
  With probability $1 - 4e^{2\alpha}$,
$$    
|(x_1 - x_2) - (T_1(F_1/k|f) - T_1(F_1/ck))| \leq\\
\frac{4\alpha \sqrt{h}t}{\epsilon k}\left(1 + \frac{1}{c}\right).
$$ 
A
straightforward computation and a union bound will complete the proof.
\end{proof}

In previous work Dwork et al.~\cite{dwork-pan} present an algorithm
that outputs an estimate for $\HH(k)$ in $[HH(k(1+\rho)) - \alpha*m,
HH(k/(1+rho) + \alpha*m]$ with probability $1-\delta$. Extending their
algorithm to full space can improve the additive error to
$O(\sqrt{m})$ with constant probability. For $k = O(1)$, which is the
usual range for this parameter, our algorithm outperforms Dwork et
al.'s.

\section{Extensions}
\label{ext}

In the following we extend ideas used in the Heavy Hitters upper bound
to other problems. We consider the cropped second moment and inner
product problems which have not been addressed in the context of
pan-privacy before. The performence of our inner product algorithm
matches the lower bound presented in Section~\ref{LB}. We show how to
relax some of the assumptions made in Section~\ref{HH}.

\subsection{Inner Products and $T_2$}

A simple extension of the cropped sum estimator from Section~\ref{HH}
allows us to estimate the cropped dot product of two sets of 
updates, as well as the cropped second moment of an input.

Let, as before, $\mathcal{D}_0$ be the uniform distribution over $\{0,
1\}$ and $\mathcal{D}_1$ be the distribution that assigns probability
$1/2 + \epsilon/4$ to 1 and the remaining probability to 0.We compute
an estimate $\widetilde{(\va\cdot\va')}(\tau)$ of $(\va\cdot\va')(\tau)$ as
follows:
\begin{itemize}
\item For each $i \in [m]$, initialize a independentently initialized
  counters $c_i, c'_i \in_R \{0, \ldots \sqrt{\tau}-1\}$, and
  bits $b_i, b'_i \sim \mathcal{D}_0$
\item When item $i$ arrives as an update in $S$, increment the
  counter $c_{i} \pmod{\sqrt{\tau}}$. If $c_{i} = 0$ pick
  $b_{i}$ from $\mathcal{D}_1$. Process the updates in $S'$
  analogously.
\item At query time,
  \begin{itemize}
  \item compute $o := |\{i: b_i = b'_i = 1\}|$;
  \item output:
    $$
    \widetilde{(\va \cdot\va')}(\tau) = (o - \tilde{T_1}(\sqrt{\tau})/2 -
    \tilde{T_1'}(\sqrt{\tau})/2 -  m/4)\frac{16\tau}{\epsilon^2}
    $$
  \end{itemize}
\end{itemize}

\begin{lemma}\label{lm:cdot}
  The estimator $\widetilde{(\va\cdot\va')}(\tau)$ is $2\epsilon$-differentially
  private. Moreover, with probability $1 - 6e^{-2 \alpha}$,
  $$
  |(\va\cdot\va')(\tau) - \widetilde{(\va\cdot\va')}(\tau)| \leq \frac{16\alpha
    \tau\sqrt{m}}{\epsilon^2} \left(1 +
    \frac{\epsilon}{4\sqrt{\tau}}\right). 
  $$
\end{lemma}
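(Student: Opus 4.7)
The plan is to mirror the structure of Lemma~\ref{lm:csum-full}, handling privacy by composition and utility by first computing $\expect[o]$ exactly and then layering three concentration bounds. For privacy, the two halves of the state, namely $(c_i, b_i)_{i \in \Uni}$ driven by $S$ and $(c'_i, b'_i)_{i \in \Uni}$ driven by $S'$, are maintained under exactly the same update rule as the cropped sum estimator of Lemma~\ref{lm:csum-full} with cropping parameter $\sqrt{\tau}$. Hence each half is $\epsilon$-pan private with respect to its input stream. Since $\widetilde{(\va\cdot\va')}(\tau)$ and the internal state are a deterministic post-processing of both halves, the composition theorem (Theorem~\ref{composition}) immediately gives $2\epsilon$-pan privacy.

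For utility, I would first establish an expectation identity. By the single-stream analysis underlying Lemma~\ref{lm:csum-full}, $\Pr[b_i = 1] = \tfrac{1}{2} + \tfrac{\epsilon}{4\sqrt{\tau}}\min(\ai, \sqrt{\tau})$, and analogously for $b'_i$, with all of $b_i$, $b'_i$ mutually independent across $i \in \Uni$ and across the two streams. Writing $x_i = \tfrac{\epsilon}{4\sqrt{\tau}}\min(\ai,\sqrt{\tau})$ and $y_i = \tfrac{\epsilon}{4\sqrt{\tau}}\min(\st{i}', \sqrt{\tau})$, the elementary identity $(\tfrac{1}{2}+x_i)(\tfrac{1}{2}+y_i) - \tfrac{1}{2}(\tfrac{1}{2}+x_i) - \tfrac{1}{2}(\tfrac{1}{2}+y_i) + \tfrac{1}{4} = x_i y_i$, summed over $i$, gives
\begin{equation*}
\expect\!\left[o - \tfrac{1}{2}\tilde{T}_1(\sqrt{\tau}) - \tfrac{1}{2}\tilde{T}_1'(\sqrt{\tau}) - \tfrac{m}{4}\right] \;=\; \frac{\epsilon^2}{16\tau}\sum_{i \in \Uni}\min(\ai,\sqrt{\tau})\min(\st{i}',\sqrt{\tau}),
\end{equation*}
where $\tilde{T}_1(\sqrt{\tau})$ is interpreted as the appropriate affine shift of the per-stream count from Lemma~\ref{lm:csum-full}. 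A short case analysis on the four regions of $(\ai, \st{i}')$ relative to $\sqrt{\tau}$ shows that the product-of-mins on the right either equals $\min(\ai\st{i}', \tau)$ exactly or differs from it by an amount dominated by the Hoeffding-scale additive error below. Multiplying through by $16\tau/\epsilon^2$ shows the estimator is (essentially) unbiased for $(\va\cdot\va')(\tau)$.

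The concentration half then uses three independent tail bounds. Since $o = \sum_i \mathbf{1}[b_i = b'_i = 1]$ is a sum of $m$ independent indicators, Hoeffding's inequality gives $|o - \expect[o]| \leq \alpha\sqrt{m}$ with probability at least $1 - 2e^{-2\alpha}$. Applying Lemma~\ref{lm:csum-full} to each of the two single-stream sub-estimators $\tilde{T}_1(\sqrt{\tau})$ and $\tilde{T}_1'(\sqrt{\tau})$ bounds their deviations from $T_1(\sqrt{\tau})$ and $T_1'(\sqrt{\tau})$ by $4\alpha\sqrt{\tau}\sqrt{m}/\epsilon$, each with failure probability $2e^{-2\alpha}$. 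A union bound over the three events yields total failure probability $6e^{-2\alpha}$, matching the statement. Combining the three deviations by the triangle inequality inside the parenthesis and then rescaling by $16\tau/\epsilon^2$ produces a main term $16\alpha\tau\sqrt{m}/\epsilon^2$ from $o$, plus two correction terms from the $\tilde{T}_1$ pieces of size $(16\tau/\epsilon^2)\cdot\tfrac{1}{2}\cdot(4\alpha\sqrt{\tau}\sqrt{m}/\epsilon)$; collecting these gives exactly the announced bound with slack $(1 + \epsilon/(4\sqrt{\tau}))$.

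The main obstacle is careful bookkeeping of the affine rearrangement. One must track signs between $o$, $\tilde{T}_1(\sqrt{\tau})$, $\tilde{T}_1'(\sqrt{\tau})$, and $m/4$ to verify the unbiasedness identity under the correct interpretation of the per-stream subtractions, and then verify that after rescaling by $16\tau/\epsilon^2$ the Hoeffding error on $o$ genuinely dominates the two errors inherited from the cropped sum estimators, leaving only a benign $(1 + \epsilon/(4\sqrt{\tau}))$ overhead rather than anything that blows up in $\tau$.
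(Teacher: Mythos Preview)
Your overall strategy is exactly the paper's: privacy by composing two cropped-sum states, unbiasedness from the product expansion of $\Pr[b_i=b'_i=1]$ using independence of $b_i$ and $b'_i$, and then a Hoeffding bound on $o$ combined with the $\tilde T_1$ guarantees of Lemma~\ref{lm:csum-full}, union-bounded over three events to get failure probability $6e^{-2\alpha}$. The paper's proof is in fact terser than yours; in particular it simply writes the cross term in the expansion as $\epsilon^2\min(\ai\st{i}',\tau)/16\tau$ and asserts unbiasedness directly, without pausing on the product-of-mins versus min-of-product distinction you (rightly) flag.

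There is, however, a concrete arithmetic gap in your last paragraph. The two $\tilde T_1$ correction terms you write down have size
\[
\frac{16\tau}{\epsilon^2}\cdot\frac12\cdot\frac{4\alpha\sqrt{\tau}\sqrt{m}}{\epsilon}
\;=\;\frac{32\alpha\,\tau^{3/2}\sqrt{m}}{\epsilon^3}
\;=\;\frac{16\alpha\tau\sqrt{m}}{\epsilon^2}\cdot\frac{2\sqrt{\tau}}{\epsilon},
\]
so collecting them yields a slack factor of order $(1+4\sqrt{\tau}/\epsilon)$, which is the \emph{reciprocal} of the announced $(1+\epsilon/4\sqrt{\tau})$ and in the regime $\tau\gg\epsilon^2$ is vastly larger. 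The issue is that you are applying the Lemma~\ref{lm:csum-full} error bound to the already-rescaled quantity $\tilde T_1(\sqrt\tau)$ and then multiplying again by $16\tau/\epsilon^2$; to reach the stated bound the $T_1$-corrections must enter at the scale of the raw counts $o_1=\abs{\{i:b_i=1\}}$ inside the Hoeffding bracket, not as the full $\tilde T_1$ output. Revisit precisely what sits inside the parenthesis being scaled by $16\tau/\epsilon^2$ before claiming the final constant. Separately, your assertion that the discrepancy between $\min(\ai,\sqrt\tau)\min(\st{i}',\sqrt\tau)$ and $\min(\ai\st{i}',\tau)$ is ``dominated by the Hoeffding-scale additive error'' needs an argument: the per-coordinate gap can be $\Theta(\tau)$, so summing over $\Uni$ this is not obviously $O(\tau\sqrt m)$ without further structure. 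The paper does not address this point either; it simply writes the identity as if equality held.
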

\begin{proof}
  The proof of privacy follows from the analysis of the cropped means
  estimator~\cite{dwork-pan}. The utility analysis is
  also a simple extensions as follows.

  By the analysis of ~\cite{dwork-pan}, for every $i \in [m]$,
  $\prob[b_i = 1] = 1/2 + \epsilon \min(\ai,
  \sqrt{\tau})/4\sqrt{\tau}$, and similarly $\prob[b'_i = 1] = 1/2 +
  \epsilon \min(\ai', \sqrt{\tau})/4\sqrt{\tau}$. Since for every $i$,
  $b_i$ and $b'_i$ are independent, we have
  \begin{align*}
    \prob[b_i &= b'_i = 1] = \left(\frac{1}{2} +
      \frac{\epsilon \min(\ai,
        \sqrt{\tau})}{4\sqrt{\tau}}\right)\left(\frac{1}{2} + \frac{\epsilon
        \min(\ai', \sqrt{\tau})}{4\sqrt{\tau}}\right) \\
    &= \frac{1}{4} + \frac{\epsilon
      \min(\ai, \sqrt{\tau})}{8\sqrt{t}} +  \frac{\epsilon
      \min(\ai', \sqrt{\tau})}{8\sqrt{\tau}} + \frac{\epsilon^2
      \min(\ai \ai', \tau)}{16\tau}.
  \end{align*}

  Therefore, $\expect[\widetilde{(\va\cdot\va')}(\tau)] =
  (\va\cdot\va')(\tau)$, and the theorem follows by a Hoeffding bound
  and the guarantees for $\tilde{T_1}$.
\end{proof}

Notice that Lemma~\ref{lm:cdot} is valid regardless of whether $S_t$ and
$S'_t$ are interleved in an arbitrary manner. Also, we can take $S_t = S'_t$, and the algorithm gives an estimate for $T_2$,
i.e. $\tilde{T_2}(\tau) = \widetilde{(\va\cdot\va)}(\tau)$.

\subsection{Random Oracle and $F_1^{(t_0)}$}

Two assumptions that we make in Section~\ref{HH} are that we have
oracle access to a random function $f$ and that the value
$F_1^{(t_0)}$ at the time step $t_0$ when the algorithm is queried is
known before the sequence of updates is processed. Here we show how
these assumptions can be relaxed.

Notice that, assuming a bound $a_i \leq U$, our heavy hitters
algorithm uses constant space. Therefore Nisan's pseudorandom
generator~\cite{nisan1992pseudorandom} can be used to remove the first
assumption. To address the second assumption, we can assume an upper
bound $U_0$ on $F_1 = F_1^{(t_0)}$. Then we can run $\log U_0 + 1$
instances of our heavy hitters algorithm in parallel with $F'_1$ (the
projected value of $F_1^{(t_0)}$ set to $1, 2, 4, \ldots, U_0$,
respectively. At query time we use the output of the algorithm
instance with $F'_1$ set to $2^{\lceil \log F_1\rceil}$. This
procedure gives us a $2(\log U_0 + 1)\epsilon$-pan private algorithm
that outputs an estimate $\tilde{\HH}(k) \in [(1-\beta)\HH(k) -
O(\sqrt{k}), \HH(O(k^2)) + O(\sqrt{k})]$.

\section{Concluding Remarks}
Inspired by~\cite{dwork-pan}, we study pan-private algorithms that guarantee differential privacy
of data analyses even when the internal memory of the algorithm may be compromised by an unannounced 
intrusion of an attacker. ~\cite{dwork-pan} used techniques from random response~\cite{rr} on top
of sampling to get pan-private streaming algorithms for some of the basic statistical estimates on 
the input. 

We addressed fundamental questions about the memory, its size and its role in pan-privacy. 
We showed that distinct count can not be estimated accurately to additive error even given unbounded space;
this is based on approach for showing lower bounds via noisy decoding. 
We also showed a streaming algorithm that is pan-private and matches this accuracy. We also show
worst case $O(k)$ approximate streaming pan-private algorithm for estimating heavy-hitter counts. 
Both of these upper bounds come from using sketches. Also, it is interesting that while we do not require
pan-private algorithms to use small memory, the best known algorithms so far are streaming, that is,
they use sublinear memory. 

We find the notion of pan-privacy to be intriguing, and believe more needs to be understood in this
intersection of differential privacy and streaming. For example, in streaming, many problems can be 
solved in presence negative and positive $d_j$'s. While our distince count estimation in this
paper works in this case and is pan-private, we  leave it open to address the difficulty of obtaining 
pan-private algorithms for other problems in such cases. Also, the basic model of pan-privacy here can 
be extended to the case when there are multiple intrusions or even continual intrusions~\cite{cynthia}. 
Under those models, what statistical estimates can be computed accurately and privately? 

We conclude this paper with the observation that our insights so far give pan-private 
approximations for related problems such as $T_2$ (cropped $F_2$) and inner products. We leave it
open to extend these results to other problems such as entropy estimation.

\newpage
\bibliographystyle{plain}	
\bibliography{streaming}

\newpage
\end{document}